\newenvironment{proof}{{\indent \it Proof:}}{\hfill $\blacksquare$\par}
\newenvironment{AppendixProof}{}{\hfill $\blacksquare$\par}
\newtheorem{theorem}{Theorem}
\begin{document}
\title{Performance Analysis of Integrated Sensing and Communication Networks with Blockage Effects}

\author{Zezhong Sun, Shi Yan,~\IEEEmembership{Member,~IEEE,} Ning Jiang, Jiaen Zhou, and Mugen Peng,~\IEEEmembership{Fellow,~IEEE}
    % <-this % stops a space
    \thanks{This work was supported by National Key R\&D Program of China (No. 2021YFB2900200). National Natural Science Foundation of China under(62371067, U21A20444, 61921003, 61831002), Beijing Natural Science Foundation under Grant L223007. }% <-this % stops a space
    \thanks{Copyright (c) 2024 IEEE. Personal use of this material is permitted. However, permission to use this material for any other purposes must be obtained from the IEEE by sending a request to pubs-permissions@ieee.org.}
    \thanks{Zezhong Sun (szz@bupt.edu.cn), Shi Yan (yanshi01@bupt.edu.cn), Ning Jiang (creativejn@bupt.edu.cn), Jiaen Zhou (circularje@bupt.edu.cn), and Mugen Peng (pmg@bupt.edu.cn) are with the State Key Laboratory of Networking and Switching Technology, Beijing University of Posts and Telecommunications, Beijing 100876, China. {\textbf{(Corresponding author: Shi Yan.)}}}
}

% The paper headers
\markboth{IEEE TRANSACTIONS ON VEHICULAR TECHNOLOGY,~VOL.~XX, NO.~X, XXXXXX~2024}%
{Zezhong Sun \MakeLowercase{\textit{et al.}}: Performance Analysis in Integrated Sensing and Communication (ISAC) Networks: A Stochastic Geometry Approach}

\IEEEpubid{\begin{minipage}{\textwidth}\ \\[12pt]\centering
        0000-0000~\copyright~2024 IEEE. Personal use is permitted, but republication/redistribution requires IEEE permission.\\
        See https://www.ieee.org/publications/rights/index.html for more information.
    \end{minipage}
}
% Remember, if you use this you must call \IEEEpubidadjcol in the second column for its text to clear the IEEEpubid mark.
\IEEEpubidadjcol

\maketitle

\begin{abstract}
    Communication-sensing integration represents an up-and-coming area of research, enabling wireless networks to simultaneously perform communication and sensing tasks. However, in urban cellular networks, the blockage of buildings results in a complex signal propagation environment, affecting the performance analysis of integrated sensing and communication (ISAC) networks. To overcome this obstacle, this paper constructs a comprehensive framework considering building blockage and employs a distance-correlated blockage model to analyze interference from line of sight (LoS), non-line of sight (NLoS), and target reflection cascading (TRC) links. Using stochastic geometric theory, expressions for signal-to-interference-plus-noise ratio (SINR) and coverage probability for communication and sensing in the presence of blockage are derived, allowing for a comprehensive comparison under the same parameters. The research findings indicate that blockage can positively impact coverage, especially in enhancing communication performance. The analysis also suggests that there exists an optimal base station (BS) density when blockage is of the same order of magnitude as the BS density, maximizing communication or sensing coverage probability.
\end{abstract}

\begin{IEEEkeywords}
    Integrated sensing and communication (ISAC), performance analysis, blockage effect, stochastic geometry.
\end{IEEEkeywords}

\section{Introduction}
\subsection{Background and Motivations}
\IEEEPARstart{I}{ntegrated} sensing and communication (ISAC) is a potential key technology for beyond-fifth-generation (B5G) and sixth-generation (6G) wireless communications and has received wide attention from academia and industry \cite{liuIntegratedSensingCommunications2022}, \cite{wangRoad6GVisions2023}. ISAC could not only share the same wireless infrastructures to enhance the hardware utilization efficiency but also provide sensing capabilities for wireless networks to obtain information about the surrounding environment \cite{sunModelingQuantitativeAnalysis2022, huangJointLocalizationEnvironment2023, tongEnvironmentSensingConsidering2022}. Dual-function design of communication and radar \cite{liuJointRadarCommunication2020} is a hot issue in ISAC research, which transmits communication information to the receiver and extracts sensing data from the scattered echoes, such as position and velocity. With the development of emerging services such as autonomous driving, industrial internet of things, and digital twin \cite{cuiIntegratingSensingCommunications2021}, future dual-functional radar and communication (DFRC) \cite{liuJointTransmitBeamforming2020} systems are expected to provide sensing applications, e.g., target tracking \cite{zhongEmpoweringV2XNetwork2022}, intrusion detection, and environmental monitoring \cite{wuWiTrajRobustIndoor2023}. In the industry, ISAC is developed in Huawei \cite{liIntegratedSensingCommunication2021a} and ZTE \cite{maHighlyEfficientWaveform2023} for wireless network investigations. Regarding standardization, 3GPP SA1 has established research project TR 22.837 \cite{3gppStudyIntegratedSensing2022} to discuss the use cases and potential requirements for ISAC while focusing on security and privacy.

\IEEEpubidadjcol

The early development of ISAC focused on simultaneously achieving communication and radar functions, with an emphasis on radar sensing \cite{zhangEnablingJointCommunication2021}. Its evolution went through stages such as coexistence of communication and radar \cite{zhengRadarCommunicationCoexistence2019}, radar-centric communication, and communication-centric radar, gradually evolving into ISAC \cite{chiriyathRadarCommunicationsConvergenceCoexistence2017}.
In the ISAC systems, the symbiotic relationship between communication and sensing is the linchpin for unlocking their full potential. As these systems evolve from the early stages of exploring communication-radar dynamics to the sophisticated integration of communication and sensing, the need to analyze their performance in real-world scenarios becomes paramount, which is the focus of this paper.

\subsection{Related Works}
The research on ISAC performance analysis transitioned from early single-point systems performance to network-level systems performance. Initial studies concentrated on the ISAC in individual devices or systems, emphasizing how the fusion of these functions could enhance spectrum and energy efficiencies \cite{douChannelSharingAided2023, wuEnergyEfficientMIMOIntegrated2024}, reduce hardware and signal processing costs \cite{zhangOverviewSignalProcessing2021}, and explore the performance limits in information theory \cite{liuSurveyFundamentalLimits2022}.

The characterization of the trade-off between communication and sensing performance is fundamental to ISAC theory. In \cite{liOuterBoundsJoint2021}, the authors derived Cramér-Rao bounds (CRB) for direction, distance, and velocity estimation in single-station uplink scenario, establishing the trade-off relationship between estimation performance and communication rate. Similarly, \cite{ouyangPerformanceDownlinkUplink2022} analyzed the performance of downlink and uplink ISAC systems from an information-theoretic perspective, discussing communication rate (CR), sensing rate (SR), and achievable CR-SR regions in the scenario where a single base station (BS) serves multiple users. In \cite{xiongFundamentalTradeoffIntegrated2023}, the authors considered a point-to-point model in a dual-function ISAC transmitter scenario, depicting the two-fold trade-offs between achievable communication rates and achievable CRB. Some researchers dedicated efforts to integrated waveform design, where a full-duplex ISAC waveform scheme was devised, and the communication and sensing performances of a half-duplex single-base node were analyzed and optimized \cite{xiaoWaveformDesignPerformance2022}. For massive MIMO-ISAC systems, J-PoTdCe significantly enhances target detection and channel estimation performance by leveraging the pilot beamforming gain and joint burst sparsity of radar and communication channels \cite{huangJointPilotOptimization2022}. Additionally, a unified channel and target parameter estimation algorithm was proposed, which utilizes an iterative estimation method to improve sensing accuracy while enhancing channel estimation performance \cite{zhangIntegratedSensingCommunication2024}.

With the continuous advancement of wireless communication and sensing technologies, wireless networks are evolving towards greater density, leading to the extension of ISAC research to the network level \cite{chengIntegratedSensingCommunications2022, mengNetworkLevelIntegratedSensing2023}. This involves collaborative work among multiple devices, sensors, or nodes to construct more robust and intelligent communication-sensing networks \cite{weiMultiFunctional6GWireless2022}. At the network level, ISAC faces challenges due to the simultaneous communication and sensing tasks of multiple nodes, making the system more susceptible to interference. The rational analysis of inter-BS interference distribution and accurate representation of network performance are crucial for guiding network planning and optimization \cite{liuDistributedUnsupervisedLearning2023}. Stochastic geometry is commonly employed to model and analyze the spatial layout and location distribution of nodes in wireless networks, providing a powerful analytical framework \cite{hmamoucheNewTrendsStochastic2021, zhangImpactsAntennaDowntilt2023, okatiDownlinkCoverageRate2020}. In \cite{renPerformanceTradeoffsJoint2019}, the authors considered interference constraints in radar-communication networks, deriving the trade-off between radar detection range and network throughput in shared wireless channel scenarios.

However, in practical deployment, especially in urban environments, the previous works often insufficiently consider the blockage effects introduced by buildings. These blockage effects make interference analysis between nodes complex, increasing the difficulty of system performance analysis. In \cite{olsonCoverageCapacityJoint2022}, a new mathematical framework was proposed to characterize the coverage probability and ergodic capacity of communication and sensing in ISAC networks. The study analyzed the downlink sensing and communication coverage range and capacity of an ISAC network implemented with OFDM radar. In \cite{skouroumounisFDJCASTechniquesMmWave2022}, the spatial distribution of BSs was modeled as a $\beta$-Ginibre Point Process ($\beta$-GPP), providing insights into the collaborative detection performance of an FD-JCAS system in a heterogeneous millimeter-wave cellular network. In \cite{maPerformanceCooperativeDetection2023}, the distribution of blockages in urban areas was modeled by analyzing geographical information system (GIS) measurement statistics, and the successful detection probability and communication probability of ISAC systems were derived. When the impact of blockage was considered in these literatures, the exponential blockage models, generalized sphere models, average probability approaches, and similar methods are always applied to simplify the analysis.

\subsection{Our Contributions}
This paper addresses a critical research gap by examining the coverage performance of ISAC networks in the presence of blockage effects. Unlike existing studies that primarily focus on single-node analysis, our research extends to a network-level perspective, providing insights into the complex dynamics of communication-sensing interactions under realistic urban conditions. In this context, this paper will focus on analyzing the coverage performance of ISAC networks with blockage effects, proposing a general analytical model that is easy to handle. This model avoids mutual interference problems between communication and sensing through the design of time-division duplexing. By developing easy-to-implement special cases and closed-form expressions, we facilitate the practical evaluation and optimization of ISAC networks, thereby advancing the state-of-the-art in this field. The main contributions of this paper are summarized as follows:

\begin{itemize}
    \item Firstly, a generalized framework is proposed through stochastic geometry theory while considering the inherent randomness of BS deployment, signal propagation, target reflection, environmental blockages, and interference. A distance-dependent blockage model is adopted in the urban scenario, which integrates the effects of line of sight (LoS), non-line of sight (NLoS), and target reflection cascading (TRC) paths.
    \item Secondly, the proposed framework facilitates an evaluation of the impact on coverage performance in ISAC networks. This evaluation entails the resolution of power loss distribution resulting from link blockage and potential interference for each path independently. Subsequently, expressions for both the communication and sensing signal-to-interference-plus-noise ratio (SINR) and coverage probability are derived.
    \item Thirdly, special cases such as no blockage and no noise are analyzed. Based on stripping out the influencing factors, closed-form expressions that are easier to compute are derived. It is observed that in these idealized conditions, the communication coverage probability can be independent of the density of BS deployment, but the conditions that make the sensing BS-independent special case will be more demanding.
    \item Finally, the theoretical accuracy is verified by simulation. The simulation results demonstrate that increasing the BS deployment density can help improve the communication and sensing coverage probabilities when the BS deployment density is lower than the blocking density. The results also show that there exists an optimal BS density that maximizes the communication or sensing coverage probability when using the nearest visible association strategy, while blocking can provide a favorable gain in coverage probability by blocking more interference.
\end{itemize}

\subsection{Outline of the Paper}
The rest of this paper is organized as follows. Section II introduces the general ISAC system model with multiple BSs. Then, Section III evaluates the communication performance and sensing performance, in which the coverage probabilities under various scenarios are derived. Finally, Section IV provides numerical results to verify our analysis before concluding the paper in Section V. 

\textit{Notations:} We use regular, italic, bold lowercase, and bold uppercase letters to denote function names, scalars, vectors, and matrices, respectively. $\mathbb{E}(\cdot)$ denotes the expectation operation. $I_m(\cdot)$ denotes the modified Bessel function of the first kind of order $m$. $a!$ is the factorial of the nonnegative integer $a$. $\Gamma(\cdot)$ denotes the gamma function. $_2F_1(\cdot)$ denotes the Gauss hypergeometric function. $\mathrm{erfc}(\cdot)$ is the compensation error function. For quick reference, the main mathematical notations used in this paper are summarized in Table I.

\begin{table}\caption{Summary of Notations\label{tab:table1}}
    \centering
    \begin{tabular}{|m{1.9cm}<{\centering}|m{6cm}|}
        \hline
        \textbf{Notation}                     &  \multicolumn{1}{c|}{\textbf{Description}}  \\ \hline
        $\Phi_{\mathrm{bs}}$                  & Set of locations of all ISAC BSs except $b_0$. \\ \hline
        $\lambda_{\mathrm{bs}}$               & Density of ISAC BSs. \\ \hline
        $k_L$, $k_N$, $k_R$                   & Gain coefficients of the LoS path, NLoS path and echo path. \\ \hline
        $\alpha_L$, $\alpha_N$, $\alpha_R$    & Path loss exponents of the LoS path, NLoS path and echo path. \\ \hline
        $g^{\mathrm{sens}}_{L,i}$, $g^{\mathrm{sens}}_{N,i}$, $g^{\mathrm{comm}}_{L,i}$, $g^{\mathrm{comm}}_{N,i}$  & Small-scale fading gain of the LoS and NLoS interference link from the $i$-th BS to the sensing and communication associated BS. \\ \hline
        $\sigma_\mathrm{rcs}$                 & Radar cross-section of the ST. \\ \hline
        $b_i$                                 & The $i$-th ISAC BS, where $b_0$ denotes the sensing and communication associated BS. \\ \hline
        $R_i$, $\tilde{R}_i$, $\hat{R}_i$     & Distance between $b_i$ and UE / ST / $b_0$, where $R_0$ / $\tilde{R}_0$ denotes the distance between $b_0$ and UE / ST, and $\hat{R}_0$ means nothing. \\ \hline
        $M_i$, $\tilde{M}_i$, $\hat{M}_i$     & Independent Bernoulli random variables indicating whether there is a LoS path between $b_i$ and UE / ST / $b_0$, where $S_0 = \tilde{M}_0 = 1$, and $\hat{M}_0$ means nothing. \\ \hline
        $\mathrm{PL}_L$, $\mathrm{PL}_N$, $\mathrm{PL}_R$  & Large-scale path loss functions for LoS, NLoS and echo paths. \\ \hline
        $\mathrm{Pr}_L$, $\mathrm{Pr}_N$      & Probability of LoS and NLoS propagation. \\ \hline
        $\beta$                               & The parameter related to the shape and size of the blockage. \\ \hline
        $p$                                   & The proportion of the area in the region that is blocked. \\ \hline
    \end{tabular}
\end{table}
\section{System Model} \label{SystemModel}

In our proposed ISAC system, as illustrated in Fig. \ref{fig1}, a multi-BS environment coexists with user equipment (UEs) and sensing targets (STs), where ISAC-enabled BSs perform dual functionalities. The locations of BS are modeled as a Poisson point process (PPP) on the $\mathbb{R}^2$ plane with density of $\lambda_{\mathrm{bs}}$, denoted as $\Phi_{\mathrm{bs}}$.

The sensing aspect of the ISAC system leverages radar principles, capitalizing on echoes from transmitted pulses to discern and localize STs. To maintain clarity in the initial explanation, the STs are simplified as point scatterers, primarily detectable under Line-of-Sight (LoS) conditions. After transmitting a dedicated sensing signal, the BSs await the return of echoes from the STs. These echoes carry imprints of the target's position, velocity, and potentially other characteristics. Using the processed echoes, target parameters are estimated. The communication operations embrace a versatile model that considers both LoS and Non-Line-of-Sight (NLoS) propagation paths, ensuring comprehensive coverage and reliability. On this basis, the typical ST and the typical UE are both associated with the nearest visible BS. For simplicity, it is assumed that the sensing-associated BS and the communication-associated BS are the same and denoted as $b_0$.

To synchronize these intertwined operations, wireless resources are temporally divided into communication slots and sensing slots, with each BS synchronized through wired connections to ensure simultaneous execution of communication or sensing tasks within the same time slot. Assuming that each sensing echo returns within the sensing slot, which isolates mutual interference between communication and sensing. Additionally, the slot allocation ratio can be adjusted based on different service requirements to enhance the flexibility of communication sensing functions.

\begin{figure}[t]
    \centering
    \includegraphics[width=8.5cm]{./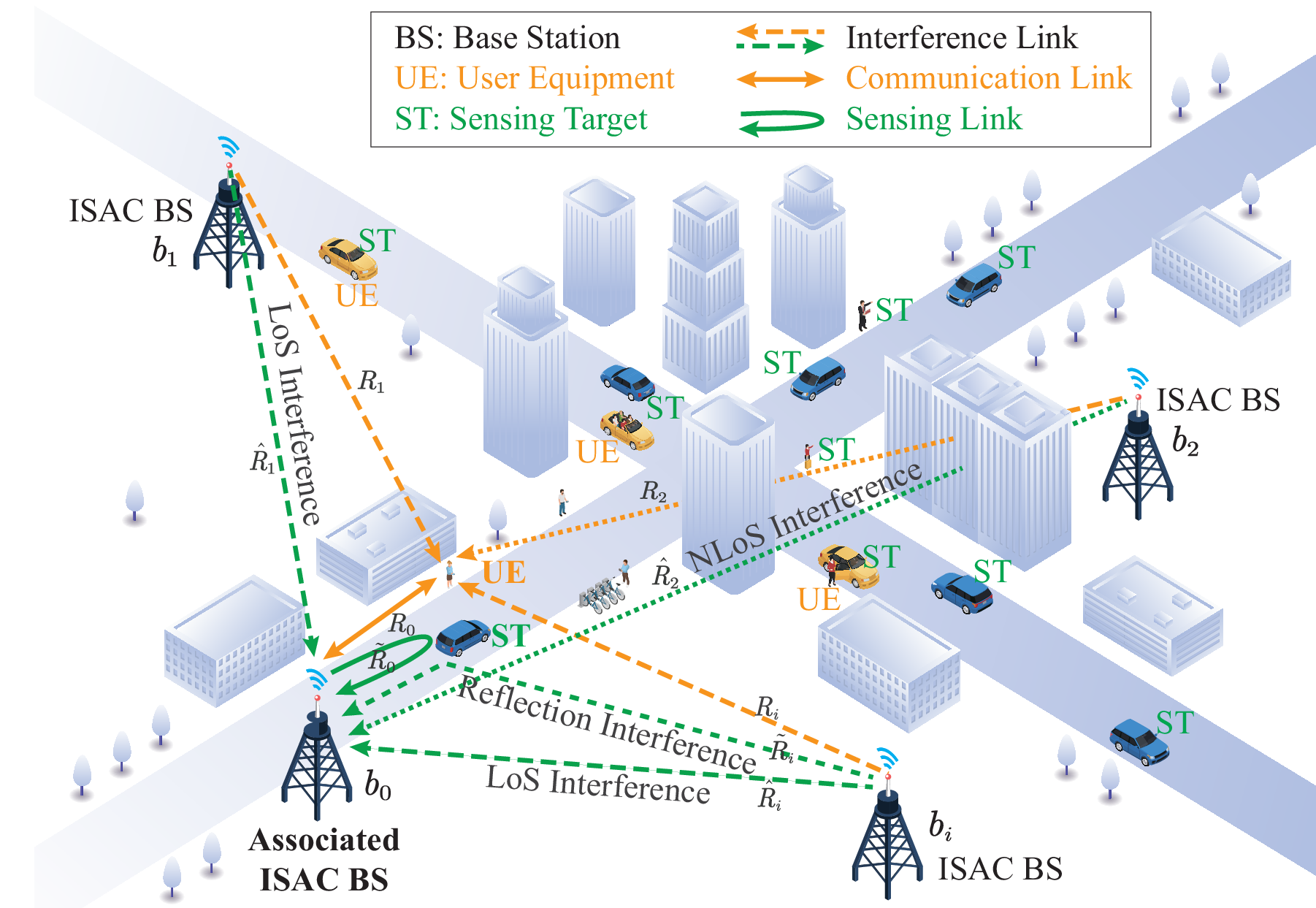}
    \caption{Schematic diagram of the considered system model.}\label{fig1}
\end{figure}

\subsection{Path Loss and Blockage Model}
The large-scale path loss functions for LoS and NLoS paths can be expressed as
\begin{equation}
    \mathrm{PL}_L(r)=k_L r^{-\alpha_L},\;\;\;\;
    \mathrm{PL}_N(r)=k_N r^{-\alpha_N},
\end{equation}
where $k_L$ and $k_N$ are coefficients of the LoS path and NLoS path, and $\alpha_L$ and $\alpha_N$ are the path loss exponents for both. Similarly, the sensed LoS echo path loss can be derived from the monostatic radar equation as
\begin{equation}
    \mathrm{PL}_R(r)=k_R r^{-\alpha_R},
\end{equation}
where $k_R = k_L/4\pi$ is the gain coefficient of the echo path, and $\alpha_R=2\alpha_L$ is its road loss exponent. 

Assume a Boolean scheme for blockage formation of rectangles. The centers of the rectangles form a homogeneous PPP with density $\lambda_\mathrm{bk}$. The length $L_k$ and the width $W_k$ of the rectangles are assumed to be i.i.d. According to Theorem I in \cite{baiAnalysisBlockageEffects2014}, the number of blockages is a Poisson random variable with the mean $\beta R + p$. The probability of being blocked is related to the link length, and the LoS and NLoS probabilities for a link of length $R$ can be modeled as
\begin{equation}
    \mathrm{Pr}_L(r)=e^{-(\beta r + p)},\;\;\;\;
    \mathrm{Pr}_N(r)=1-\mathrm{Pr}_L(r),
\end{equation}
where $\beta = 2\lambda_\mathrm{bk} (\mathbb{E}[L]+\mathbb{E}[W])/\pi$ is the parameter related to the shape and size of the blockage and $p = \lambda_\mathrm{bk} \mathbb{E}[L]\mathbb{E}[W]$ is the proportion of the area in the region that is blocked. Specific typical values can be selected for different scenarios according to the 3GPP TS 38.901 standard \cite{3gppStudyChannelModel2020}.

Assuming that the number of blockages on the link is independent, the probability density function (PDF)\footnote{Note: This function does not satisfy the normalization property due to the presence of blockages. Therefore it should not be called a PDF rigorously. However, the probability hypothesis density (PHD) can be used to describe this non-normalized measure.} of the distance $R_0$ (or $\tilde{R}_0$) from the UE (or ST) to the nearest visible BS is then \cite{baiAnalysisBlockageEffects2014}
\begin{equation}\label{eq:fR0PDF}
    f_{R_0}(r) = 2\pi \lambda_{\mathrm{bs}} r e^{-(\beta r+p +2\pi \lambda_{\mathrm{bs}} U(r))},
\end{equation}
where $U(r) = (e^{-p}/\beta^2)[1-(\beta r+1)e^{-\beta r}]$. Its integral does not satisfy the normalization condition and can be calculated as
\begin{equation}\label{eq:fR0PDFintegral}
    \begin{aligned}
    F_{R_0}(r) &= \int_{0}^{\infty}f_{R_0}(r) \mathrm{d}r 
    \\
    &= 1- \exp\left(-\frac{2e^{-p}\lambda_\mathrm{bs}\pi}{\beta^2}\right).
    \end{aligned}
\end{equation}

In particular, when both $\beta$ and $p$ tend to 0,
\begin{equation}
    U^o(r)=\lim_{p\rightarrow 0}\lim_{\beta\rightarrow 0} U(r) = \frac{r^2}{2},
\end{equation}
the nearest visible association degenerates into the nearest association policy. The PDF of the distance $R_0$ (or $\tilde{R}_0$) from the UE (or ST) to the nearest BS is then \cite{andrewsTractableApproachCoverage2011}
\begin{equation}
    \begin{aligned}
        f^o_{R_0}(r) &= \lim_{p\rightarrow 0}\lim_{\beta\rightarrow 0}f_{R_0}(r) 
        \\
        &= 2\pi \lambda_{\mathrm{bs}} r e^{-\lambda_{\mathrm{bs}} \pi r^2}.
    \end{aligned}
\end{equation}

\subsection{Desired Signal Model}

Assuming that the communication-associated BS transmits a constant mode unit signal and that there is a LoS path between the UE and $b_0$, the downlink communication received signal is

\begin{equation}
    S^{\mathrm{comm}}=g^{\mathrm{comm}}_{L,0}\mathrm{PL}_L(R_0){M}_0,
\end{equation}
where $R_0$ is the distance from $b_0$ to UE, $g^{\mathrm{comm}}_{L,0}$ is its small-scale fading gain, which obeys Rician distribution. $M_0$ is a binary indicator variable used to indicate the presence of a LoS path between $b_0$ and UE. According to the previous assumptions, communication is conducted when LoS is present, so ${M}_0=1$ holds constantly.

The sensing is carried out based on the echo signal, where the transmitted signal arrives at the receiver after round-trip large-scale fading and target reflection attenuation, and the received sensing signal can be modeled as

\begin{equation}\label{eq:Ssens}
    S^{\mathrm{sens}}=\sigma_\mathrm{rcs}\mathrm{PL}_R(\tilde{R}_0)\tilde{M}_0,
\end{equation}
where $\tilde{R}_0$ is the distance from $b_0$ to ST, and $\sigma_\mathrm{rcs}$ is the ST's radar cross-section (RCS). Similarly, $\tilde{M}_0 = 1$ holds constantly, indicating the assumption that there is always a LoS path between $b_0$ and ST.

% 干扰模型
\subsection{Interference Model}

Considering the framework of the time-division system mentioned earlier, where each BS is synchronized to ensure simultaneous execution of communication or sensing tasks within the same time slot, there is no need to consider mutual interference between communication and sensing. Therefore, interference is modeled as self-interference during the execution of their respective tasks.

The interference of the communication considers the downlink interference at the UE, so the UE at the origin is chosen as a typical UE. The interference it suffers is mainly composed of two parts, interference from LoS paths $I^{\mathrm{comm}}_\mathrm{LoS}$ and interference from NLoS paths $I^{\mathrm{comm}}_\mathrm{NLoS}$, which can be expressed as
\begin{align}
    I^{\mathrm{comm}}=I^{\mathrm{comm}}_\mathrm{LoS}+I^{\mathrm{comm}}_{\mathrm{NLoS}}           \label{eq:Icomm},
\end{align}
wherein
\begin{align}
    I^{\mathrm{comm}}_\mathrm{LoS}      & =\sum_{i\in\Phi_{\mathrm{bs}}}g^{\mathrm{comm}}_{L,i}\mathrm{PL}_L(R_i)M_i,           \tag{\ref{eq:Icomm}{a}} \label{eq:Icomm_a} \\
    I^{\mathrm{comm}}_{\mathrm{NLoS}} & =\sum_{i\in\Phi_{\mathrm{bs}}}g^{\mathrm{comm}}_{N,i}\mathrm{PL}_N(R_i)(1-M_i),      \tag{\ref{eq:Icomm}{b}} \label{eq:Icomm_b}
\end{align}
where $b_0$ is the communication-related BS. $g^{\mathrm{comm}}_{L,i}$ and $g^{\mathrm{comm}}_{N,i}$ are the small-scale fading gains on the LoS and NLoS interference links, respectively. Furthermore, $R_i$ is the distance from $b_i$ to UE, and $M_i\sim \mathrm{Bernoulli}(\mathrm{Pr}_L(R_i))$ are independent Bernoulli random variables indicating whether there is a LoS path between $b_i$ and UE.

Since the sensing results are obtained at the BS, according to Slivnyak's theorem, without loss of generality, the BS located at the origin is chosen as the typical sensing analysis node. The interference generated by other BSs can be composed of three parts. If there is a LoS path between the interfering BS and the typical BS, the interference is denoted as $I^{\mathrm{sens}}_\mathrm{LoS}$, otherwise it is expressed as $I^{\mathrm{sens}}_{\mathrm{NLoS}}$. In addition, when there is no collaboration between BSs, the interference $I^{\mathrm{sens}}_{\mathrm{TRC}}$ generated at the sensing BS by the signal transmitted by the interfering BS after reflection from the ST should also be considered, which is referred to as the target reflection cascade signal.\footnote{Note: The TRC signal here is a different term from the previous echo signal. The echo refers to the useful signal returned by the sensing signal after being reflected by the target, whereas TRC refers to the interfering signal that arrives at the sensing BS after being reflected by the target by the signals transmitted by other BSs.} Accordingly, the aggregated sensing interference can be derived as
\begin{align}
    I^{\mathrm{sens}} =  I^{\mathrm{sens}}_\mathrm{LoS}+I^{\mathrm{sens}}_{\mathrm{NLoS}}+I^{\mathrm{sens}}_{\mathrm{TRC}},   \label{eq:Isens}
\end{align}
wherein
\begin{align}
    I^{\mathrm{sens}}_\mathrm{LoS} =    & \sum_{i\in\Phi_{\mathrm{bs}}}g^{\mathrm{sens}}_{L,i}\mathrm{PL}_L(\hat{R}_i)\hat{M}_i,                                                                \tag{\ref{eq:Isens}{a}} \label{eq:Isens_a} \\
    I^{\mathrm{sens}}_{\mathrm{NLoS}} = & \sum_{i\in\Phi_{\mathrm{bs}}}g^{\mathrm{sens}}_{N,i}\mathrm{PL}_N(\hat{R}_i)(1-\hat{M}_i),                                                         \tag{\ref{eq:Isens}{b}} \label{eq:Isens_b}    \\
    I^{\mathrm{sens}}_{\mathrm{TRC}} =  & \sum_{i\in\Phi_{\mathrm{bs}}}\sigma_\mathrm{rcs}k_R\tilde{R}_i^{-\alpha_L}\tilde{M}_i \times \tilde{R}_0^{-\alpha_L} \tilde{M}_0,    \tag{\ref{eq:Isens}{c}} \label{eq:Isens_c}
\end{align}
where $b_0$ is the sensing-associated BS, which is assumed to be the same as the communication-associated BS for convenience. $g^{\mathrm{sens}}_{L,i}$ and $g^{\mathrm{sens}}_{N,i}$ are the small-scale fading gains on the LoS and NLoS interference links respectively. Besides, $\hat{R}_i$ is the distance from the $i$-th interfering BS $b_i$ to $b_0$, $\tilde{R}_i$ is the distance from $b_i$ to ST. Furthermore, $\hat{M}_i\sim \mathrm{Bernoulli}(\mathrm{Pr}_L(\hat{R}_i))$ are independent Bernoulli random variables indicating whether there is a LoS path between $b_i$ and $b_0$, and $\tilde{M}_i\sim \mathrm{Bernoulli}(\mathrm{Pr}_L(\tilde{R}_i))$ denote the existence of a LoS path between $b_i$ and ST. Notice that $\tilde{M}_0=1$ holds constantly.

\subsection{Small-scale Fading Model}
For the LoS interference link, assuming that its small-scale fading obeys a Rician distribution and the corresponding instantaneous signal power obeys a scaled non-central chi-squared distribution with two degrees of freedom, its PDF can be expressed as
\begin{equation}\label{eq:RicianPDF}
    \begin{aligned}
        f_{g^{\mathrm{c/s}}_{L,i}}\left( x \right) & =\left( 1+K \right) e^{-K-\left( 1+K \right) x}I_0\left( 2\sqrt{K\left( 1+K \right)x} \right) , \\
    \end{aligned}
\end{equation}
where ``$\mathrm{c/s}$'' means ``$\mathrm{comm}$'' or ``$\mathrm{sens}$'', $K$ is the Rician factor, denoting the ratio of the power in the dominant component to the average power in the diffuse components, $I_0(\cdot)$ denotes the modified Bessel function of the first kind and zero order. The PDF can be approximated by a finite exponential series as \cite{yanGameTheoryApproach2019}
\begin{equation}\label{eq:RicianPDFApprox}
    \begin{aligned}
        f_{g^{\mathrm{c/s}}_{L,i}}\left( x \right) & \approx \sum_{n=1}^{N} w_n^Ku_n^Ke^{-u_n^Kx}\,\,(x \in [0,W]),
    \end{aligned}
\end{equation}
where $N$ is the number of terms in the exponential series to achieve the desired accuracy, usually a good approximation can be achieved by setting $N=4$. Both $w^K_n$ and $u^K_n$ are the real coefficients of the $n$-th term related to the Rician factor $K$ under the $u^K_n>0, (n=1,2,\cdots,N)$ and $\sum_{n=1}^{N}w_n^K=1$ constraints, and $[0, W]$ is the range of $x$. So, the complementary cumulative distribution function (CCDF) of the desired signal power is approximated by \cite{yangCoverageProbabilityAnalysis2015}
\begin{equation}\label{eq:RicianCCDF}
    \begin{aligned}
        \bar{F}_{g^{\mathrm{c/s}}_{L,i}}\left( x \right) & = \int_{x}^{\infty} f_{g^{\mathrm{c/s}}_{L,i}}(y) \mathrm{d} y \\
                                                          & \approx \sum_{n=1}^{N} w^K_n e^{-u^K_nx}.
    \end{aligned}
\end{equation}

For the NLoS interference link, assuming that its small-scale fading obeys a Rayleigh distribution, the corresponding instantaneous signal power obeys an exponential distribution, and its PDF and CCDF can be expressed as
\begin{equation}\label{eq:RayleighPDF}
    \begin{aligned}
        f_{g^{\mathrm{c/s}}_{N,i}}\left( x \right) & =\mu^{\mathrm{c/s}}_{N}e^{-\mu^{\mathrm{c/s}}_{N}x}
    \end{aligned}
\end{equation}
and
\begin{equation}\label{eq:RayleighCCDF}
    \begin{aligned}
        \bar{F}_{g^{\mathrm{c/s}}_{N,i}}\left( x \right) & = \int_{x}^{\infty} f_{g^{\mathrm{c/s}}_{N,i}}(y) \mathrm{d} y \\
        &=e^{-\mu^{\mathrm{c/s}}_{N}x}
    \end{aligned}
\end{equation}
respectively, where $\mu^{\mathrm{c/s}}_{N}$ is the Rayleigh fading parameter.

For self-receiving and $\mathrm{BS}_1$-sending $\mathrm{BS}_2$-receiving reflection paths, the returned signal power per pulse is assumed to be constant for the time on target during a single scan, but fluctuate independently from scan to scan, following i.i.d. exponential distributions. This corresponds to the Swerling I model commonly used in radar settings \cite{swerlingProbabilityDetectionFluctuating1960}. Under this assumption, the PDF for $\sigma_\mathrm{rcs}$ in Eqs. \eqref{eq:Ssens} and \eqref{eq:Isens_c} can be expressed as follows
\begin{equation}\label{eq:rcsPDF}
    f_{\sigma _{\mathrm{rcs}}}\left( \sigma \right) =\frac{1}{\bar{\sigma}_{\mathrm{rcs}}}e^{-\frac{\sigma}{\bar{\sigma}_{\mathrm{rcs}}}},\left( \sigma \ge 0 \right) ,
\end{equation}
where $\bar{\sigma}_{\mathrm{rcs}}$ is the average of RCS. Its CCDF can be expressed as
\begin{equation}\label{eq:rcsCCDF}
    \bar{F}_{\sigma _{\mathrm{rcs}}}( \sigma ) =\int_{\sigma}^{\infty} f_{\sigma _{\mathrm{rcs}}}(y) \mathrm{d} y=e^{-\frac{\sigma}{\bar{\sigma}_{\mathrm{rcs}}}},( \sigma \ge 0 ) .
\end{equation}

\subsection{Performance Metrics}
Communication performance is measured using the traditional coverage probability metric, which is defined as the probability that the SINR of the communication $\mathrm{SINR}^{\mathrm{comm}}$ is greater than the threshold $T^{\mathrm{comm}}$. It is related to the detection threshold, the density of BSs, the road loss exponents, and the blockage-related parameters, which can be denoted as
\begin{equation}\label{eq:pccommDef}
    \begin{aligned}
        p_{\mathrm{c}}^{\mathrm{comm}}(T^{\mathrm{ comm}},\lambda_{\mathrm{bs}},\alpha_L,\alpha_N,\beta ,p) \\
        = \mathbb{P}[\mathrm{SINR}^{\mathrm{comm}}>T^{\mathrm{comm}}].
    \end{aligned}
\end{equation}

The $\mathrm{SINR}^{\mathrm{comm}}$ can be directly expressed as
\begin{equation}
    \mathrm{SINR}^{\mathrm{comm}}=\frac{S^{\mathrm{comm}}}{I^{\mathrm{comm}}+N^{\mathrm{comm}}},
\end{equation}
where $N^{\mathrm{comm}}$ is the additive noise power of communication.

The sensing task is divided into target detection and parameter estimation, with detection performance characterized by the probability of detection and estimation performance described by the accuracy of the parameters. When the false alarm probability is fixed, the detection probability obtained using the Neyman-Pearson detection criterion is directly related to the SINR of sensing $\mathrm{SINR}^{\mathrm{sens}}$ \cite{sunModelingQuantitativeAnalysis2022}. 

The probability that $\mathrm{SINR}^{\mathrm{sens}}$ is greater than the signal processing threshold $T^{\mathrm{sens}}$ is defined as the sensing coverage probability (or successful detection probability), which can be given by
\begin{equation}\label{eq:pcsensDef}
    \begin{aligned}
        p_{\mathrm{c}}^{\mathrm{sens}}(T^{\mathrm{ sens}},\lambda_{\mathrm{bs}},\alpha_L,\alpha_N,\alpha_R,\beta ,p)
        \\
        = \mathbb{P}[\mathrm{SINR}^{\mathrm{sens}}>T^{\mathrm{sens}}].
    \end{aligned}
\end{equation}

The sensing SINR is defined as
\begin{equation}
    \mathrm{SINR}^{\mathrm{sens}}=\frac{S^{\mathrm{sens}}}{I^{\mathrm{sens}}+N^{\mathrm{sens}}},
\end{equation}
where $N^{\mathrm{sens}}$ is the additive noise power of sensing. 

In the time-division multiplexing system, assuming that the communication and sensing utilize the entire spectrum and power resources in each time slot, allowing for the evaluation of the snapshot performance of the ISAC networks at a specific moment. In this configuration, the proportion of communication to sensing time slots does not directly impact the SINR values for each measurement but may result in a macroscopic loss of achievable communication rates. Its impact on sensing detection probability is reflected in coherent or noncoherent cumulative gains, while its effect on estimation accuracy manifests in the allocation ratio of time-frequency resources and the configuration of frame structures.

\section{Performance Analysis}

\subsection{Communication Coverage Probability}
\begin{theorem}
    For specific detection threshold $T_\mathrm{comm}$, LoS and NLoS path loss exponents $\alpha_L$ and $\alpha_N$, and blockage parameters $p$ and $\beta$, the communication coverage probability, $p_{\mathrm{c}}^{\mathrm{comm}}(T^{\mathrm{ comm}},\lambda_{\mathrm{bs}},\alpha_L,\alpha_N,\beta ,p)$, can be calculated as \eqref{eq:PCcomm} at the top of the next page,
    \begin{figure*}[!htb]
        \begin{equation}\label{eq:PCcomm}
            \begin{aligned}
                p_{\mathrm{c}}^{\mathrm{comm}}(&T^{\mathrm{comm}},\lambda_{\mathrm{bs}},\alpha_L,\alpha_N,\beta ,p)=\int_0^{\infty}\sum_{n=1}^N{w_{n}^{K}}\exp \Bigg[ -\frac{u_{n}^{K}r^{\alpha _L}T^{\mathrm{comm}}N^{\mathrm{comm}}}{k_L}
                \\
                &-2\pi \lambda_{\mathrm{bs}} \Bigg[ \sum_{m=1}^N{w_{m}^{K}\mathbb{F}\left( \frac{u_{m}^{K}}{u_{n}^{K}r^{\alpha _L}T^{\mathrm{comm}}},\alpha _L,\mathrm{Pr}_L(x),r \right)}+\mathbb{F}\left( \frac{\mu_{N}^{\mathrm{comm}}k_L}{u_{n}^{K}r^{\alpha _L}T^{\mathrm{comm}}k_N},\alpha _N,\mathrm{Pr}_N(x),0 \right) \Bigg] \Bigg] f_{R_0}\left( r \right) \mathrm{d}r
            \end{aligned},
        \end{equation}
        \begin{equation}\label{eq:PCsens}
            \begin{aligned}
                p_{\mathrm{c}}^{\mathrm{sens}}(T^{\mathrm{sens}},\lambda_{\mathrm{bs}},\alpha_L,\alpha_N,\alpha_R,\beta ,p)&=\int_0^{\infty}\exp \Bigg[ -\frac{r^{\alpha _R}T^{\mathrm{sens}}N^{\mathrm{sens}}}{\bar{\sigma}_{\mathrm{rcs}}k_R}-2\pi \lambda_{\mathrm{bs}} \Bigg[ \sum_{n=1}^N{w_{n}^{K}\mathbb{F}\left( \frac{u_{n}^{K}\bar{\sigma}_{\mathrm{rcs}}k_R}{r^{\alpha _R}T^{\mathrm{sens}}k_L},\alpha _L,\mathrm{Pr}_L(x),r \right)}
                \\
                &+\mathbb{F}\left( \frac{\mu_{N}^{\mathrm{sens}}\bar{\sigma}_{\mathrm{rcs}}k_R}{r^{\alpha _R}T^{\mathrm{sens}}k_N},\alpha _N,\mathrm{Pr}_N(x),0 \right) +\mathbb{F}\left( \frac{r^{\alpha _L}}{r^{\alpha _R}T^{\mathrm{sens}}},\alpha _L,\mathrm{Pr}_L(x),r \right) \Bigg] \Bigg] f_{\tilde{R}_0}(r)\mathrm{d}r
            \end{aligned},
        \end{equation}
        \hrule
    \end{figure*}
    where
    \begin{equation}
        \mathbb{F}\left( \varepsilon ,\alpha ,p(x),h \right) =\int_h^{\infty}{\frac{x p(x)}{\varepsilon x^{\alpha}+1}\mathrm{d}x}.
    \end{equation}
\end{theorem}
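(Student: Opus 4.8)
The plan is to compute $p_{\mathrm{c}}^{\mathrm{comm}} = \mathbb{P}[\mathrm{SINR}^{\mathrm{comm}} > T^{\mathrm{comm}}]$ by the standard ``condition, Laplace-transform, de-condition'' route of stochastic geometry. First I would condition on the serving distance $R_0 = r$. Since $M_0 = 1$, the desired signal is $S^{\mathrm{comm}} = g_{L,0}^{\mathrm{comm}} k_L r^{-\alpha_L}$, so the coverage event is equivalent to $g_{L,0}^{\mathrm{comm}} > \frac{T^{\mathrm{comm}} r^{\alpha_L}}{k_L}(I^{\mathrm{comm}} + N^{\mathrm{comm}})$. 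The gain $g_{L,0}^{\mathrm{comm}}$ is Rician, and I would invoke the exponential-series CCDF approximation \eqref{eq:RicianCCDF}, $\bar F_{g_{L,0}^{\mathrm{comm}}}(x) \approx \sum_{n=1}^N w_n^K e^{-u_n^K x}$, so that conditioning on the interference and noise and averaging over the desired fading yields $\sum_{n=1}^N w_n^K \,\mathbb{E}\!\left[\exp\!\left(-u_n^K \frac{T^{\mathrm{comm}} r^{\alpha_L}}{k_L}(I^{\mathrm{comm}} + N^{\mathrm{comm}})\right)\right]$. Because $N^{\mathrm{comm}}$ is a deterministic constant it factors out as $\exp(-u_n^K r^{\alpha_L} T^{\mathrm{comm}} N^{\mathrm{comm}}/k_L)$, producing the noise term in the exponent of \eqref{eq:PCcomm} and leaving the Laplace transform of $I^{\mathrm{comm}}$ evaluated at $s_n := u_n^K T^{\mathrm{comm}} r^{\alpha_L}/k_L$.

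The core computation is the Laplace transform $\mathcal{L}_{I^{\mathrm{comm}}}(s_n)$. I would first note that assigning to each BS an independent LoS/NLoS mark via $M_i \sim \mathrm{Bernoulli}(\mathrm{Pr}_L(R_i))$ splits $\Phi_{\mathrm{bs}}$, by the marking (colouring) theorem, into two independent inhomogeneous PPPs with intensities $\lambda_{\mathrm{bs}}\mathrm{Pr}_L(\rho)$ and $\lambda_{\mathrm{bs}}\mathrm{Pr}_N(\rho)$, so that $I^{\mathrm{comm}}_{\mathrm{LoS}}$ and $I^{\mathrm{comm}}_{\mathrm{NLoS}}$ are independent and the transform factorises. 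Applying the probability generating functional of a PPP to each piece reduces it to an exponential of a plane integral of $1 - \mathbb{E}_g[e^{-s_n g\,\mathrm{PL}(\rho)}]$ against the respective thinned intensity. For the NLoS part $g_{N,i}^{\mathrm{comm}}$ is exponential with parameter $\mu_N^{\mathrm{comm}}$, giving the elementary fading factor $1/(1 + s_n k_N \rho^{-\alpha_N}/\mu_N^{\mathrm{comm}})$; for the LoS part I would use the PDF series \eqref{eq:RicianPDFApprox}, $f_{g_{L,i}^{\mathrm{comm}}}(x) \approx \sum_{m=1}^N w_m^K u_m^K e^{-u_m^K x}$, giving $\sum_{m=1}^N w_m^K/(1 + s_n k_L \rho^{-\alpha_L}/u_m^K)$. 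After inserting the polar area element $2\pi\rho\,\mathrm{d}\rho$ and simplifying $1 - \sum_m w_m^K/(1 + a_m) = \sum_m w_m^K/(1 + 1/a_m)$ via $\sum_m w_m^K = 1$, each contribution collapses into the kernel $\int \frac{\rho\, p(\rho)}{\varepsilon \rho^{\alpha} + 1}\,\mathrm{d}\rho = \mathbb{F}(\varepsilon,\alpha,p(x),\cdot)$, with $\varepsilon = u_m^K/(u_n^K r^{\alpha_L} T^{\mathrm{comm}})$ for the LoS terms and $\varepsilon = \mu_N^{\mathrm{comm}} k_L/(u_n^K r^{\alpha_L} T^{\mathrm{comm}} k_N)$ for the NLoS term, matching \eqref{eq:PCcomm} exactly.

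The step requiring genuine care --- and the main obstacle --- is the lower limit of integration, which differs between the two $\mathbb{F}$ terms ($r$ for LoS, $0$ for NLoS). This is dictated by the nearest-visible association: since $b_0$ is the closest \emph{LoS} BS at distance $r$, conditioning on $R_0 = r$ forbids any LoS interferer inside the disc $B(0,r)$, so the LoS-thinned PPP is integrated over $\rho \in [r,\infty)$, whereas the independent NLoS-thinned PPP is unconstrained by the association and integrated over $\rho \in [0,\infty)$. Justifying that this conditioning leaves the NLoS process undisturbed and merely excises the disc for the LoS process is the delicate point; it rests on the independence of the two coloured processes together with the void-probability characterisation behind $f_{R_0}$ in \eqref{eq:fR0PDF}. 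Finally I would de-condition by integrating the conditional coverage probability against $f_{R_0}(r)$ over $r \in (0,\infty)$, which assembles precisely \eqref{eq:PCcomm}. The sensing expression \eqref{eq:PCsens} then follows along identical lines, with the Rician desired-signal sum replaced by the exponential RCS CCDF \eqref{eq:rcsCCDF} and one additional $\mathbb{F}$ contribution arising from the TRC interference \eqref{eq:Isens_c}.
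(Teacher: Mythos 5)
Your proposal is correct and follows essentially the same route as the paper's Appendix A: conditioning on the serving distance $R_0=r$, applying the exponential-series Rician CCDF, factoring the noise term out of the Laplace transform of $I^{\mathrm{comm}}$, splitting into LoS/NLoS contributions evaluated via the PGFL with the crucial asymmetric lower limits ($r$ for LoS, $0$ for NLoS), and de-conditioning against $f_{R_0}$. The only difference is presentational: you justify the LoS/NLoS factorisation explicitly via the colouring theorem (independent thinning into two inhomogeneous PPPs), whereas the paper asserts $\mathcal{L}_{I^{\mathrm{comm}}} = \mathcal{L}_{I^{\mathrm{comm}}_{\mathrm{LoS}}}\mathcal{L}_{I^{\mathrm{comm}}_{\mathrm{NLoS}}}$ and recovers the same result by taking the Bernoulli-mark expectation inside the PGFL of the single marked process --- your version is slightly more rigorous on that step but computationally identical.
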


\begin{proof}
    The proof is provided in Appendix A.
\end{proof}

This formula can be divided into three segments: the first addressing attenuation influenced by noise, the second focusing on attenuation affected by LoS interference, and the third concerning attenuation attributable to NLoS interference. Each segment encompasses fundamental signal attenuation over distance, the impacts of blockage, and the deployment density of BSs. This structured representation facilitates the identification and isolation of individual factors, providing a lucid framework for subsequent theoretical simplification and practical application. It not only enables the separate examination of how each source of attenuation specifically impacts system performance but also paves the way for model simplification under certain scenarios—such as in the absence of blockages or low-noise environments—thereby accelerating insights into and realization of system optimization designs.

It is interesting to note that, when $\alpha > 2$, there exists a quasi-closed-form solution to the $\mathbb{F}(\cdot)$ function 
\begin{equation} \label{eq:FSpecialCase3}
    \begin{aligned}
        \mathbb{F}\left( \varepsilon ,\alpha ,1,h \right) &=\int_h^{\infty}{\frac{x }{\varepsilon x^{\alpha}+1}\mathrm{d}x}
        \\
        &=\frac{h^{-(\alpha - 2)}{_{2}F_1}\left(1,\frac{\alpha - 2}{\alpha},2-\frac{2}{\alpha},-\frac{h^{-\alpha}}{\varepsilon }\right)}{\varepsilon (\alpha - 2)},
    \end{aligned}
\end{equation}
where ${_{2}F_1}(\cdot)$ is a hypergeometric function which can be calculated using the Gamma functions and finite integral as
\begin{equation} \label{eq:FSpecialCase2}
    \begin{aligned}
        {_{2}F_1}(a,b;c;t)=\frac{\Gamma(c)}{\Gamma(b)\Gamma(c-b)}\int_0^1\frac{z^{b-1}(1-z)^{c-b-1}}{(1-tz)^a}\mathrm{d}z
    \end{aligned},
\end{equation}
where $\Gamma(z)=\int_0^{\infty}t^{z-1}e^{-t}\mathrm{d}t$, and it can be computed by modern computers almost as quickly as the basic function.

More particularly, when $\alpha = 4$, there exists a closed-form solution to the $\mathbb{F}(\cdot)$ function as
\begin{equation} \label{eq:FSpecialCase11}
    \begin{aligned}
        \mathbb{F}\left( \varepsilon , 4, 1, h \right) &=\int_h^{\infty}{\frac{x }{\varepsilon x^{4}+1}\mathrm{d}x}
        \\
        &=\frac{\pi-2\arctan(h^2\sqrt{\varepsilon})}{4\sqrt{\varepsilon} }.
    \end{aligned}
\end{equation}

Due to the presence of blockages, the distribution of distances between the accessing BS and the served UE is related to the BS density, so even if noise is not taken into account, the derivative function of Theorem 1 with respect to $\lambda_{\mathrm{bs}}$ is still not constantly 0, and the communication coverage probability varies with the BS density $\lambda_{\mathrm{bs}}$. However, because of the existence of quasi-closed-form expressions at $\alpha>2$, the optimal BS deployment density for a given parameter can be computed quite easily.

In practical urban environments, although blockage effects are pervasive, they can be disregarded in special scenarios such as open areas or near elevated BSs, and in theoretical scenarios where BS density significantly surpasses blockage density. Under these conditions, with blockage effects minimized, system performance approaches its theoretical optimum. Analyzing communication and sensing coverage probabilities without considering blockages helps establish a performance baseline under ideal or extreme conditions. This is crucial for assessing the optimization potential of real-world network layouts and configurations.

Hence, the special case of no blockage is discussed. When $\beta\rightarrow 0$ and $p \rightarrow 0$, the probabilities of LoS and NLoS propagation become $\mathrm{Pr}_L(x)\rightarrow 1$ and  $\mathrm{Pr}_N(x)\rightarrow 0$ respectively, and we have $\mathbb{F}\left( \varepsilon ,\alpha ,0,h \right)=0$. The PDF of the distance $R_0$ from the UE to the nearest BS is then $f_{R_0}(r)\rightarrow f^o_{R_0}(r)$.

\textit{Corollary 1:} When blockage is not considered, the communication coverage probability in Theorem 1 degenerates to
\begin{equation} \label{eq:Corollary1}
    \begin{aligned}
        &p_{\mathrm{c}}^{\mathrm{comm}}(T^{\mathrm{comm}},\lambda _b,\alpha _L,\cdot,0,0)
        \\
        &={\sum_{n=1}^N{w_{n}^{K}}}\int_0^{\infty}\exp \Bigg[ -\frac{u_{n}^{K}r^{\alpha _L}T^{\mathrm{comm}}N^{\mathrm{comm}}}{k_L}
        \\
        &-2\pi \lambda _b\left[ \sum_{m=1}^N{w_{m}^{K}\mathbb{F} \left( \frac{u_{m}^{K}}{u_{n}^{K}r^{\alpha _L}T^{\mathrm{comm}}},\alpha _L,1,r \right)} \right] \Bigg] f^o_{R_0}\left( r \right) \mathrm{d}r.
    \end{aligned}
\end{equation}

In the non-blockage scenario, Corollary 1 leads to some exciting conclusions at $\alpha_L>2$ or even $\alpha_L = 4$. These two path loss exponents are selected for their representativeness. For instance, in downtown or densely built-up areas, signal propagation encounters multipath effects and building blockages, frequently resulting in a path loss exponent exceeding 2, occasionally reaching as high as 3 to 4. The path loss exponent $\alpha_L = 4$ signifies a severe attenuation scenario, pertinent to high-frequency communication systems affected by atmospheric absorption and rain fading, ideal for rapid estimation in long-range transmissions or scenarios emphasizing signal degradation.

Moreover, to concentrate on the impact of interference on system performance, simplified analysis models are explored by considering noiseless scenarios. In high signal-to-noise ratio (SNR) situations, such as deep-space communications or systems employing advanced noise reduction techniques, noise becomes negligible compared to interference. This analysis elucidates the key factors for enhancing system performance under interference-limited conditions, furnishing a theoretical foundation for designing efficient anti-interference mechanisms.

So the main simplifications of the various combinations considered are (1) noisy networks with path loss exponent $\alpha_L = 4$, (2) interference-limited no noise networks with path loss exponent $\alpha_L > 2$, and (3) interference-limited no noise networks with path loss exponent $\alpha_L = 4$.

\textbf{Special Case 1:} \textit{No blockage, Noise, $\alpha_L=4$.}
Substituting Eq. \eqref{eq:FSpecialCase11} into Eq. \eqref{eq:Corollary1} yields the coverage probability of communication in the non-blockage case as
\begin{equation} \label{eq:PCcommSpecialCase11}
    \begin{aligned}
        &p_{\mathrm{c}}^{\mathrm{comm}}(T^{\mathrm{comm}},\lambda_{\mathrm{bs}},4,\cdot,0 ,0)
        \\
        &=\int_0^{\infty}{\sum_{n=1}^N{w_{n}^{K}}}\exp \Bigg[ -\frac{u_{n}^{K}r^{4}T^{\mathrm{comm}}N^{\mathrm{comm}}}{k_L}-2\pi \lambda _br^2
        \\
        &\times\bigg[ \sum_{m=1}^N{w_{m}^{K}\frac{\pi-2\arctan(\sqrt{\frac{u_{m}^{K}}{u_{n}^{K}T^{\mathrm{comm}}}})}{4\sqrt{\frac{u_{m}^{K}}{u_{n}^{K}T^{\mathrm{comm}}}} }} \bigg] \Bigg] f_{R_0}^{o}\left( r \right) \mathrm{d}r
        \\
        &\overset{\left( a \right)}{=}\pi \lambda _b\sum_{n=1}^N{w_{n}^{K}}\int_0^{\infty}{\exp \left[ -\pi\lambda _b\left(  1+2 \theta^K_n \right) v-\vartheta^K_nv^2 \right]}\mathrm{d}v
        \\
        &\overset{\left( b \right)}{=}\frac{1}{2}\pi \lambda _b\sum_{n=1}^N{w_{n}^{K}\sqrt{\frac{\pi}{\vartheta _{n}^{K}}}\exp \left[ \left( \lambda _b\varPsi _{n}^{K} \right) ^2 \right]}\mathrm{erfc}\left[ \lambda _b\varPsi _{n}^{K} \right] ,
    \end{aligned}
\end{equation}
where (a) uses the substitution $r^2\rightarrow v$, and
\begin{equation} \label{eq:PCcommSpecialCase12}
    \begin{aligned}
        \begin{cases}
            \theta _{n}^{K}=\sum_{m=1}^N{w_{m}^{K}\frac{\pi -2\mathrm{arc}\tan\mathrm{(}\sqrt{\frac{u_{m}^{K}}{u_{n}^{K}T^{\mathrm{comm}}}})}{4\sqrt{\frac{u_{m}^{K}}{u_{n}^{K}T^{\mathrm{comm}}}}}},\\
            \vartheta _{n}^{K}=\frac{u_{n}^{K}T^{\mathrm{comm}}N^{\mathrm{comm}}}{k_L},\\
            \varPsi _{n}^{K}=\frac{\pi \left( 1+2\theta _{n}^{K} \right)}{2\sqrt{\vartheta _{n}^{K}}},\\
        \end{cases}
    \end{aligned}
\end{equation}
and (b) is according to 
\begin{equation} \label{eq:PCcommSpecialCase13}
    \begin{aligned}
            \int_0^{\infty}{e^{-x-bx^2}}\mathrm{d}x=\frac{1}{2}\sqrt{\frac{\pi}{b}}\exp \left[ \left( \frac{a}{2\sqrt{b}} \right) ^2 \right] \mathrm{erfc}\left[ \frac{a}{2\sqrt{b}} \right] ,
    \end{aligned}
\end{equation}
wherein $\mathrm{erfc}(x)=\frac{2}{\sqrt{\pi}}\int_x^{+\infty}e^{-t^2}\mathrm{d}t$ is the compensation error function.

This expression is effortless to compute, requiring only a simple $\mathrm{erfc}(x)$ value, and can actually be described as a closed-form expression. This case illustrates that even under high path loss conditions with the presence of noise, favorable communication coverage probabilities can be maintained through the strategic tuning of system parameters. It is well-suited for rapid estimation in scenarios involving long-distance propagation or those that emphasize signal attenuation effects.

Further, some interesting conclusions can be obtained under this model when noise is not taken into account.

\textbf{Special Case 2:} \textit{No blockage, No Noise, $\alpha_L>2$.}
When the blockage is not considered and the noise power $N^{\mathrm{comm}} \rightarrow 0$ or the interference is much larger than the noise, and the path loss exponent $\alpha_L$ is greater than 2, the communication coverage probability is
\begin{equation} \label{eq:PCcommSpecialCase21}
    \begin{aligned}
        &p_{\mathrm{c,noiseless}}^{\mathrm{comm}}(T^{\mathrm{comm}},\lambda_{\mathrm{bs}},\alpha_L,\cdot,0 ,0)
        \\
        &=\int_0^{\infty}{\sum_{n=1}^N{w_{n}^{K}}}\exp \Bigg[ -2\pi \lambda _b
        \\
        &\times\left[ \sum_{m=1}^N{w_{m}^{K}\mathbb{F} \left( \frac{u_{m}^{K}}{u_{n}^{K}r^{\alpha _L}T^{\mathrm{comm}}},\alpha _L,1,r \right)} \right] \Bigg] f^o_{R_0}\left( r \right) \mathrm{d}r
        \\
        &\overset{\left( c \right)}{=}\pi \lambda _b\sum_{n=1}^N{w_{n}^{K}}\int_0^{\infty}\exp \left[ -\pi \lambda _b v\right] {\exp \left[ -2\pi \lambda _b \xi _{n}^{K} v \right]}\mathrm{d}v
        \\
        &=\sum_{n=1}^N{w_{n}^{K}\frac{1}{1+2 \xi _{n}^{K}}},
    \end{aligned}
\end{equation}
where (c) uses the substitution $r^2\rightarrow v$, and
\begin{equation} \label{eq:PCcommSpecialCase22}
    \begin{aligned}
        \xi _{n}^{K} = \sum_{m=1}^{N}w_{m}^{K}\frac{_2F_1\left( 1,\frac{\alpha _L-2}{\alpha _L},2-\frac{2}{\alpha _L},-\frac{u_{n}^{K}T^{\mathrm{comm}}}{u_{m}^{K}} \right) }{\frac{u_{m}^{K}}{u_{n}^{K}T^{\mathrm{comm}}}(\alpha _L-2)}.
    \end{aligned}
\end{equation}

In this case, the communication coverage probability is independent of the BS density $\lambda_{\mathrm{bs}}$. It only relates to the detection threshold $T^{\mathrm{comm}}$ and the path loss exponent $\alpha_L$. This underscores that under high SNR conditions, enhancements in system performance predominantly stem from optimizing the detection threshold and accounting for signal propagation characteristics, rather than merely augmenting the number of BSs. It serves as a fitting approach for swiftly estimating the upper bounds of performance under idealized conditions.

\textbf{Special Case 3:} \textit{No blockage, No Noise, $\alpha_L=4$.}
 Combining the first two special cases and substituting $\alpha_L=4$ make the closed-form expression as follows
\begin{equation} \label{eq:PCcommSpecialCase31}
    \begin{aligned}
        &p_{\mathrm{c,noiseless}}^{\mathrm{comm}}(T^{\mathrm{comm}},\lambda_{\mathrm{bs}},4,\cdot,0 ,0)
        \\
        &=\int_0^{\infty}{\sum_{n=1}^N{w_{n}^{K}}}\exp \Bigg[ -2\pi \lambda _b r^2
        \\
        &\times\bigg[ \sum_{m=1}^N{w_{m}^{K}\frac{\pi-2\arctan(\sqrt{\frac{u_{m}^{K}}{u_{n}^{K}T^{\mathrm{comm}}}})}{4\sqrt{\frac{u_{m}^{K}}{u_{n}^{K}T^{\mathrm{comm}}}} }} \bigg] \Bigg] f_{R_0}^{o}\left( r \right) \mathrm{d}r
        \\
        &=\pi \lambda _b\sum_{n=1}^N{w_{n}^{K}}\int_0^{\infty}{\exp \left[ -\pi\lambda_{\mathrm{bs}}\left(  1+2 \theta^K_n \right) v \right]}\mathrm{d}v
        \\
        &=\sum_{n=1}^N{w_{n}^{K}\frac{1}{1+2 \theta _{n}^{K}}},
    \end{aligned}
\end{equation}
where $\theta^K_n$ is given by Eq. \eqref{eq:PCcommSpecialCase12}. 

Similar to conventional communication results, this is a very easy-to-compute closed-form expression that relates only to the detection threshold $T^{\mathrm{comm}}$. Under these highly controlled conditions—absence of both noise and blockages, alongside a severe path loss scenario typical of high-frequency transmissions—the coverage probability becomes particularly sensitive to the detection threshold $T^{\mathrm{comm}}$ and the specifics of the signal decay rate governed by $\alpha_L$, while completely disregarding the influence of BS deployment density. 

This finding reinforces the notion that in extremely challenging propagation environments, optimizing system parameters, especially the sensitivity of detection and leveraging knowledge of the propagation model, is paramount to achieving optimal performance. It presents a valuable tool for promptly assessing the limits of achievable communication reliability in demanding long-range or signal-degraded applications.

\subsection{Sensing Coverage Probability}
\begin{theorem}
    For specific detection threshold $T_\mathrm{sens}$, LoS, NLoS and echo path loss exponents $\alpha_L$, $\alpha_N$ and $\alpha_R$, and blockage parameters $p$ and $\beta$, the sensing coverage probability, $p_{\mathrm{c}}^{\mathrm{sens}}(T^{\mathrm{ sens}},\lambda_{\mathrm{bs}},\alpha_L,\alpha_N,\alpha_R,\beta ,p)$, can be calculated as \eqref{eq:PCsens} at the top of the last page.
\end{theorem}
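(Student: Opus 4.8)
The plan is to mirror the derivation behind Theorem 1, replacing the Rician desired-signal gain by the exponential RCS and appending the extra target-reflection-cascade (TRC) term. Starting from the definition \eqref{eq:pcsensDef}, I would write $p_{\mathrm{c}}^{\mathrm{sens}}=\mathbb{P}[\sigma_\mathrm{rcs}k_R\tilde{R}_0^{-\alpha_R}>T^{\mathrm{sens}}(I^{\mathrm{sens}}+N^{\mathrm{sens}})]$, condition on $\tilde{R}_0=r$, and invoke the exponential CCDF \eqref{eq:rcsCCDF} of $\sigma_\mathrm{rcs}$. Because the RCS is a single exponential rather than the Rician series \eqref{eq:RicianCCDF}, the outer $\sum_n w_n^K$ that heads \eqref{eq:PCcomm} collapses to a single term, which is exactly why that prefactor is absent from \eqref{eq:PCsens}. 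This step reduces the conditional coverage probability to $\exp(-sN^{\mathrm{sens}})\,\mathcal{L}_{I^{\mathrm{sens}}}(s)$ with $s=T^{\mathrm{sens}}r^{\alpha_R}/(\bar{\sigma}_\mathrm{rcs}k_R)$; the noise factor already reproduces the first exponent in \eqref{eq:PCsens}, so it remains to evaluate the Laplace transform of the aggregate interference.

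Next I would factor $\mathcal{L}_{I^{\mathrm{sens}}}(s)=\mathcal{L}_{I^{\mathrm{sens}}_\mathrm{LoS}}(s)\,\mathcal{L}_{I^{\mathrm{sens}}_\mathrm{NLoS}}(s)\,\mathcal{L}_{I^{\mathrm{sens}}_\mathrm{TRC}}(s)$, treating the three contributions as independent. The LoS and NLoS factors follow Appendix A almost verbatim: apply the PPP probability generating functional, average each per-interferer term over its LoS/NLoS Bernoulli mark $\hat{M}_i\sim\mathrm{Bernoulli}(\mathrm{Pr}_L(\hat{R}_i))$ and over the small-scale gain, using the Rician series \eqref{eq:RicianPDFApprox} for the LoS gain and the exponential \eqref{eq:RayleighPDF} for the NLoS gain. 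After the standard manipulation that turns $1-\mathbb{E}[e^{-sg\,\mathrm{PL}(x)}]$ into the integrand of $\mathbb{F}$, these collapse to $\sum_{n}w_n^K\mathbb{F}(\cdot,\alpha_L,\mathrm{Pr}_L(x),r)$ and $\mathbb{F}(\cdot,\alpha_N,\mathrm{Pr}_N(x),0)$ with the arguments displayed in \eqref{eq:PCsens}; the surviving $\sum_n w_n^K$ now originates from the LoS-interference gain rather than from the desired signal, and the cutoffs $r$ and $0$ come from the nearest-visible association, as in the communication case.

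The main obstacle, and the genuinely new piece, is the TRC factor $\mathcal{L}_{I^{\mathrm{sens}}_\mathrm{TRC}}(s)$, because each cascade term in \eqref{eq:Isens_c} carries a round trip: the interferer $b_i$ illuminates the ST over distance $\tilde{R}_i$ (present only when $\tilde{M}_i=1$), the echo is scaled by an RCS, and it returns to $b_0$ over the fixed distance $\tilde{R}_0=r$. I would take the reference point to be the ST rather than $b_0$, so that the generating-functional integral runs over $\tilde{R}_i$; the Swerling-I assumption lets me treat the per-path RCS as an independent exponential of mean $\bar{\sigma}_\mathrm{rcs}$, so averaging over it again yields a factor of the $1/(\varepsilon x^{\alpha}+1)$ form, now with exponent $\alpha_L$ and $\varepsilon=r^{\alpha_L}/(r^{\alpha_R}T^{\mathrm{sens}})$ once $s$ and the common $r^{-\alpha_L}$ return-path loss are inserted. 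The lower limit here is $r$ because the nearest-visible association of the ST to $b_0$ forces $\tilde{R}_i>\tilde{R}_0=r$ for every interferer in LoS with the ST, i.e.\ for every term actually present in the sum. Collecting the three factors and integrating the conditional expression against $f_{\tilde{R}_0}(r)$ then produces \eqref{eq:PCsens}.

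Two points will need care and should be stated as modeling assumptions. First, the factorization of $\mathcal{L}_{I^{\mathrm{sens}}}$ is exact only if the direct (LoS/NLoS, measured from $b_0$) and cascade (measured from the ST) contributions of the \emph{same} BS are independent; since both are driven by one PPP, this is a first-order approximation whose cross term, of the form $\lambda_{\mathrm{bs}}\int(1-a)(1-b)\,\mathrm{d}x$ in the exponent, is second-order small in the per-interferer contribution and is neglected. Second, treating the cascade RCS as independent of the desired-signal RCS is what decouples $S^{\mathrm{sens}}$ from $I^{\mathrm{sens}}_\mathrm{TRC}$ and keeps the TRC factor a clean Laplace transform; I would justify this by the distinct aspect angles of the $b_i$–ST–$b_0$ bistatic geometries under Swerling-I and flag it explicitly, since without it the desired signal and the TRC interference would be correlated through a shared $\sigma_\mathrm{rcs}$.
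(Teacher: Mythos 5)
Your proposal follows essentially the same route as the paper's Appendix B: condition on $\tilde{R}_0=r$, apply the exponential CCDF of the RCS to reduce the conditional coverage to $e^{-sN^{\mathrm{sens}}}\mathcal{L}_{I^{\mathrm{sens}}}(s)$ with $s=r^{\alpha_R}T^{\mathrm{sens}}/(\bar{\sigma}_{\mathrm{rcs}}k_R)$, factor the Laplace transform into LoS, NLoS and TRC parts, and evaluate each by the PGFL with the same cutoffs ($r$, $0$, $r$) and the same TRC argument $\varepsilon=r^{\alpha_L}/(r^{\alpha_R}T^{\mathrm{sens}})$, exactly as the paper does. The two caveats you flag---the cross term neglected when factoring interference generated by a single PPP, and the independence of the cascade RCS from the desired-signal RCS---are approximations the paper makes silently, so stating them as explicit assumptions only strengthens the argument.
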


\begin{proof}
    The proof is provided in Appendix B.
\end{proof}

Theorem 2 introduces an expanded analytical dimension with its sensing coverage probability formula, which is segmented into four crucial components. This formula not only incorporates the three attenuation aspects discussed in Theorem 1 that are pertinent to communication scenarios but also integrates a pivotal consideration specific to sensing functionality — the interference effect arising from target reflection cascading (TRC) paths. The inclusion of this additional aspect highlights the unique challenges and mechanisms inherent to sensing systems.

Similarly to the communication coverage probability, the derivative of Theorem 2 with respect to $\lambda_{\mathrm{bs}}$ is not constantly zero. So, under specific parameter configurations, an optimal deployment density of BSs that maximizes the probability of sensing coverage exists here. In open and sparse networks or in dense networks where blockage effects are neglected, the following Corollary 2 can be derived.

\textit{Corollary 2:} The sensing coverage probability in the non-blockage scenario is given by
\begin{equation} \label{eq:Corollary2}
    \begin{aligned}
        &p_{\mathrm{c}}^{\mathrm{sens}}(T^{\mathrm{sens}},\lambda _b,\alpha _L,\cdot,\alpha_R,0,0)
        \\
        &=\int_0^{\infty}\exp \Bigg[ -\frac{r^{\alpha _R}T^{\mathrm{sens}}N^{\mathrm{sens}}}{\bar{\sigma}_{\mathrm{rcs}}k_R}
        \\
        &-2\pi \lambda_{\mathrm{bs}} \Bigg[ \sum_{n=1}^N{w_{n}^{K}\mathbb{F}\left( \frac{u_{n}^{K}\bar{\sigma}_{\mathrm{rcs}}k_R}{r^{\alpha _R}T^{\mathrm{sens}}k_L},\alpha _L,1,r \right)}
        \\
        &+\mathbb{F}\left( \frac{r^{\alpha _L}}{r^{\alpha _R}T^{\mathrm{sens}}},\alpha _L,1,r \right) \Bigg] \Bigg] f^o_{\tilde{R}_0}(r)\mathrm{d}r.
    \end{aligned}
\end{equation}

When discussing whether Corollary 2 can yield closed-form solutions, we find that it is challenging to directly obtain such solutions based solely on the current configuration parameters. Therefore, additional simplifications are necessary. Firstly, we can ignore the interference effects caused by the reflected paths. This simplification is feasible in specific contexts, such as within the framework of cooperative sensing networks, where signals transmitted by other BSs can be beneficially integrated even after being reflected by targets, rather than being considered as sources of interference. Secondly, we assume that the path loss exponents for direct and reflected paths are equal. Although this assumption may seem unrealistic intuitively because reflected waves typically travel a longer distance (usually twice the direct wave's distance) and naturally suffer more attenuation without compensation mechanisms. However, such an assumption holds significant value for delving deeper into how different path interferences affect the coverage range and mechanisms of sensing systems.

Drawing from the analysis method used for communication special cases, we similarly construct several extreme scenarios for sensing, aiming to reveal the underlying principles and optimization approaches of sensing performance under more stringent environmental constraints through these idealized settings. This analytical strategy not only contributes to theoretical advancements but also provides theoretical support for the development of efficient sensing technologies to address complex environmental challenges.

\textbf{Special Case 4:} \textit{ No blockage, Noise, $I^{\mathrm{sens}}_{\mathrm{TRC}}=0$, $\alpha_L=\alpha_R=4$.}
Substituting Eq. \eqref{eq:FSpecialCase11} into Eq. \eqref{eq:Corollary2}, and assuming $\mathbb{F}\left( \frac{r^{\alpha _L}}{r^{\alpha _R}T^{\mathrm{sens}}},\alpha _L,1,r \right)=0$ yields the coverage probability of sensing in this case as
\begin{equation} \label{eq:FSpecialCase12}
    \begin{aligned}
        &p_{\mathrm{c}}^{\mathrm{sens}}(T^{\mathrm{sens}},\lambda_{\mathrm{bs}},4,\cdot,4,0 ,0)
        \\
        &=\int_0^{\infty}\exp \Bigg[ -\frac{r^{4}T^{\mathrm{sens}}N^{\mathrm{sens}}}{\bar{\sigma}_{\mathrm{rcs}}k_R}-2\pi \lambda _br^2
        \\
        &\times\Bigg[ \sum_{n=1}^N w_{n}^{K}\frac{\pi -2\mathrm{arc}\tan\left(\sqrt{\frac{u_{n}^{K}\bar{\sigma}_{\mathrm{rcs}}k_R}{T^{\mathrm{sens}}k_L}}\right)}{4\sqrt{\frac{u_{n}^{K}\bar{\sigma}_{\mathrm{rcs}}k_R}{T^{\mathrm{sens}}k_L}}} \Bigg] \Bigg] f_{\tilde{R}_0}^{o}\left( r \right) \mathrm{d}r
        \\
        &\overset{\left( d \right)}{=} \pi \lambda _b\int_0^{\infty}{\exp}\left[ -\pi \lambda _b (1+2\theta)  v-\vartheta v^2 \right] \mathrm{d}v
        \\
        &\overset{\left( e \right)}{=}\frac{1}{2}\pi \lambda _b\sqrt{\frac{\pi}{\vartheta}}\exp \left[ \left( \lambda _b\varPsi \right) ^2 \right]\mathrm{erfc}\left[ \lambda _b\varPsi \right] ,
    \end{aligned}
\end{equation}
where (d) uses the substitution $r^2\rightarrow v$, (e) uses Eq. \eqref{eq:PCcommSpecialCase13}, and
\begin{equation} \label{eq:FSpecialCase13}
    \begin{aligned}
        \begin{cases}
            \theta =\sum_{n=1}^N{w_{n}^{K}\frac{\pi -2\mathrm{arc}\tan \left( \sqrt{\frac{u_{n}^{K}\bar{\sigma}_{\mathrm{rcs}}k_R}{T^{\mathrm{sens}}k_L}} \right)}{4\sqrt{\frac{u_{n}^{K}\bar{\sigma}_{\mathrm{rcs}}k_R}{T^{\mathrm{sens}}k_L}}}},\\
            \vartheta =\frac{T^{\mathrm{sens}}N^{\mathrm{sens}}}{\bar{\sigma}_{\mathrm{rcs}}k_R},\\
            \varPsi =\frac{\pi \left( 1+2\theta \right)}{2\sqrt{\vartheta}}.\\
        \end{cases}
    \end{aligned}
\end{equation}

This expression is similar to the result in Special Case 1 and is even more straightforward to compute. Further, an even simpler expression exists for the calculation of Corollary 2 when the noise is not taken into account.

\textbf{Special Case 5:} \textit{No blockage, No Noise, $I^{\mathrm{sens}}_{\mathrm{TRC}}=0$, $\alpha_L=\alpha_R>2$.}
When the blockage is not considered and the noise power $N^{\mathrm{snes}} \rightarrow 0$ or the interference is much larger than the noise and the path loss exponent is greater than 2, the sensing coverage probability is
\begin{equation} \label{eq:FSpecialCase21}
    \begin{aligned}
        &p_{\mathrm{c,noiseless}}^{\mathrm{sens}}(T^{\mathrm{sens}},\lambda_{\mathrm{bs}},\alpha_L,\cdot,\alpha_L,0 ,0)
        \\
        &=\int_0^{\infty}\exp \Bigg[ -2\pi \lambda _b
        \\
        &\times\left[ \sum_{n=1}^N{w_{n}^{K}\mathbb{F} \left( \frac{u_{n}^{K}\bar{\sigma}_{\mathrm{rcs}}k_R}{r^{\alpha _L}T^{\mathrm{sens}}k_L},\alpha _L,1,r \right)} \right] \Bigg] f^o_{\tilde{R}_0}\left( r \right) \mathrm{d}r
        \\
        &\overset{\left( f \right)}{=}\pi \lambda _b\int_0^{\infty}\exp \left[ -\pi \lambda _b (1+2\xi) v\right] \mathrm{d}v
        \\
        &=\frac{1}{1+2 \xi },
    \end{aligned}
\end{equation}
where (f) uses the substitution $r^2\rightarrow v$, and
\begin{equation} \label{eq:FSpecialCase22}
    \begin{aligned}
        \xi = \sum_{n=1}^N{w_{n}^{K}\frac{_2F_1\left( 1,\frac{\alpha _L-2}{\alpha _L},2-\frac{2}{\alpha _L},-\frac{T^{\mathrm{sens}}k_L}{u_{n}^{K}\bar{\sigma}_{\mathrm{rcs}}k_R} \right)}{\frac{u_{n}^{K}\bar{\sigma}_{\mathrm{rcs}}k_R}{T^{\mathrm{sens}}k_L}(\alpha _L-2)}}.
    \end{aligned}
\end{equation}

In this case, the sensing coverage probability is independent of the BS density $\lambda_{\mathrm{bs}}$ and is only related to the detection threshold $T^{\mathrm{sens}}$ and the path loss exponent $\alpha_L$.

\textbf{Special Case 6:} \textit{No blockage, No noise, $I^{\mathrm{sens}}_{\mathrm{TRC}}=0$, $\alpha_L=\alpha_R=4$.}
 Combining special cases 4 and 5 leads to a closed-form solution expressed as
\begin{equation} \label{eq:FSpecialCase31}
    \begin{aligned}
        &p_{\mathrm{c}}^{\mathrm{sens}}(T^{\mathrm{sens}},\lambda_{\mathrm{bs}},4,\cdot,4,0 ,0)
        \\
        &=\int_0^{\infty}\exp \Bigg[-2\pi \lambda _br^2
        \\
        &\times\Bigg[ \sum_{n=1}^N w_{n}^{K}\frac{\pi -2\mathrm{arc}\tan\left(\sqrt{\frac{u_{n}^{K}\bar{\sigma}_{\mathrm{rcs}}k_R}{T^{\mathrm{sens}}k_L}}\right)}{4\sqrt{\frac{u_{n}^{K}\bar{\sigma}_{\mathrm{rcs}}k_R}{T^{\mathrm{sens}}k_L}}} \Bigg] \Bigg] f_{\tilde{R}_0}^{o}\left( r \right) \mathrm{d}r
        \\
        &= \pi \lambda _b\int_0^{\infty}{\exp}\left[ -\pi \lambda _b(1+2\theta)v \right] \mathrm{d}v
        \\
        &=\frac{1}{1+2 \theta },
    \end{aligned}
\end{equation}
where $\theta$ is given by Eq. \eqref{eq:FSpecialCase13}. Similar to conventional communication results, this is a very easy-to-compute closed-form expression that relates only to the detection threshold $T^{\mathrm{sens}}$.

\section{Simulation Results and Analysis}
In this section, numerical and Monte Carlo simulation results are given to evaluate the performance of communication and sensing. Unless otherwise specified, the generic parameters are set as follows: the area radius is set to be $1\,\mathrm{km}$ \cite{jiangCommunicationComputationAssisted2022}, the BS density is set to be $10^{-5}\,\mathrm{m}^{-2}$ \cite{ganCoverageRateAnalysis2024}, the system bandwidth is set to be $100\,\mathrm{MHz}$, the noise power spectral density is set to be $-174\,\mathrm{dBm/Hz}$ \cite{yanGameTheoryApproach2019}, the transmit power is set to be $43\,\mathrm{dBm}$ \cite{yanGameTheoryApproach2019}, the gain coefficients of LoS, NLoS and echo paths are set to $-75\,\mathrm{dB}$, $-90\,\mathrm{dB}$, and $-86\,\mathrm{dB}$, respectively \cite{olsonCoverageCapacityJoint2022}, and the path loss exponents are set to $2$, $3.2$, $4$, respectively \cite{olsonCoverageCapacityJoint2022}, the Rayleigh fading parameters are set to $1$, the average RCS of the ST is set to $20\,\mathrm{dBsm}$ (i.e. $100\,\mathrm{m}^{2}$ in radar settings \cite{skolnikIntroductionRadarSystems2002}), the values of the Ricain fading coefficients $u^K_n$ and $w^K_n$, mainly borrowed from \cite{yangCoverageProbabilityAnalysis2015}, are shown in Table \ref{tab:table2}. The main parameters are listed in Table \ref{tab:table3}.

\begin{table}\caption{Ricain fading simulation parameters} \label{tab:table2}
    \centering
    \begin{tabular}{|c|cc|cc|cc|}
        \hline
        \multirow{2}{*}{\begin{tabular}[c]{@{}c@{}}Term\\ Index\end{tabular}} & \multicolumn{2}{c|}{$K=1$}             & \multicolumn{2}{c|}{$K=5$}             & \multicolumn{2}{c|}{$K=10$}            \\ \cline{2-7} 
                                                                              & \multicolumn{1}{c|}{$w_n^K$} & $u_n^K$ & \multicolumn{1}{c|}{$w_n^K$} & $u_n^K$ & \multicolumn{1}{c|}{$w_n^K$} & $u_n^K$ \\ \hline
        $n=1$                                                                 & \multicolumn{1}{c|}{-0.8993} & 1.2475  & \multicolumn{1}{c|}{42.243}  & 2.9576  & \multicolumn{1}{c|}{177.75}  & 3.8741  \\ \hline
        $n=2$                                                                 & \multicolumn{1}{c|}{5.9324}  & 1.4298  & \multicolumn{1}{c|}{-189.99} & 3.7559  & \multicolumn{1}{c|}{-338.04} & 4.3761  \\ \hline
        $n=3$                                                                 & \multicolumn{1}{c|}{-5.4477} & 1.7436  & \multicolumn{1}{c|}{192.97}  & 4.1436  & \multicolumn{1}{c|}{297.00}  & 5.3985  \\ \hline
        $n=4$                                                                 & \multicolumn{1}{c|}{1.4145}  & 2.0326  & \multicolumn{1}{c|}{-44.229} & 4.7715  & \multicolumn{1}{c|}{-135.71} & 5.9937  \\ \hline
        \end{tabular}
\end{table}

\begin{table}\caption{Parameter Setting of the Network\label{tab:table3}}
    \centering
    \begin{tabular}{c|c}
        \hline \hline
        \textbf{Parameters}                                          & \textbf{Values}             \\ \hline
        Gain coefficient of the LoS path, $k_L$                      & $-75\,\mathrm{dB}$          \\ \hline
        Gain coefficient of the NLoS path, $k_N$                     & $-90\,\mathrm{dB}$          \\ \hline
        Gain coefficient of the echo path, $k_R$                     & $-86\,\mathrm{dB}$          \\ \hline
        Path loss exponent of the LoS path, $\alpha_L$               & $2$                         \\ \hline
        Path loss exponent of the NLoS path, $\alpha_N$              & $3.2$                       \\ \hline
        Path loss exponent of the echo path, $\alpha_R$              & $4$                         \\ \hline
        Rayleigh fading parameters, $\mu_{N}^{\mathrm{comm}}$,
        $\mu_{N}^{\mathrm{sens}}$                                 & $1$                         \\ \hline
        Rician fading factor, $K$                                    & $10$                        \\ \hline
        Density of ISAC BSs, $\lambda_{\mathrm{bs}}$                 & $10^{-5}\,\mathrm{m}^{-2}$  \\ \hline
        Transmit signal power, $P_{\mathrm{t}}$                      & $43\,\mathrm{dBm}$          \\ \hline
        Noise power spectral density, $P_{\mathrm{n}}$               & $-174\,\mathrm{dBm/Hz}$     \\ \hline
        Total bandwidth of the system, $B$                           & $100\,\mathrm{MHz}$         \\ \hline
        Average RCS of the ST, $\bar{\sigma}_{\mathrm{rcs}}$         & $20\,\mathrm{dBsm}$         \\ \hline
        Parameter related to the blockage, $\beta$                   & $0.008$                     \\ \hline
        Parameter related to the blockage, $p$                       & $0.1$                       \\ \hline
        Number of Monte Carlo simulations                            & $10000$                     \\ \hline
        The radius of the area                                       & $1\,\mathrm{km}$            \\ \hline
    \end{tabular}
\end{table}

The main simulation results are given in Figs. 2-5, where the lines indicate the numerical results, and the markers are the same color as the lines, indicating the Monte Carlo simulation results. In addition, solid lines are used to represent the results related to sensing; dotted lines are used to describe the results related to communication; and the size of the parameters represented by black, red, blue, green, and purple increases in order. The Monte Carlo simulation results are very close to the analytical formulations, confirming the correctness of the proposed model and theoretical formulations.

\subsection{Impact analysis of RCS of ST}
Fig. \ref{fig2} illustrates the variation of sensing coverage probability with detection threshold across different ST's average RCS values. Additionally, the purple dashed line in Fig. \ref{fig2} represents the communication coverage probability under the same parameters. It is evident that the impact of RCS on the coverage probability is proportional. Specifically, every $10$ dB increase in RCS corresponds to a $10$ dB gain in the coverage probability. However, due to the two-way attenuation characteristic of sensing, the sensing coverage probability will not surpass the communication coverage probability, regardless of the RCS magnitude under identical conditions.

In addition, the coverage probability cannot reach $1$, no matter how small the detection threshold is. In traditional models, the integration result should be equal to $1$. However, since the blockage modeling is equivalent to the blockage of a portion of the BSs on the original distance function, Eq. \eqref{eq:fR0PDF} does not satisfy the normalization condition, which leads to a coverage hole. From the theoretical formulation, as the detection threshold approaches $0$, the coverage probability becomes an integral of the distance function $f_{R_0}(r)$, the result of which can be obtained from Eq. \eqref{eq:fR0PDFintegral}.
\begin{figure}[t]
    \centering
    \includegraphics[width=8.5cm]{./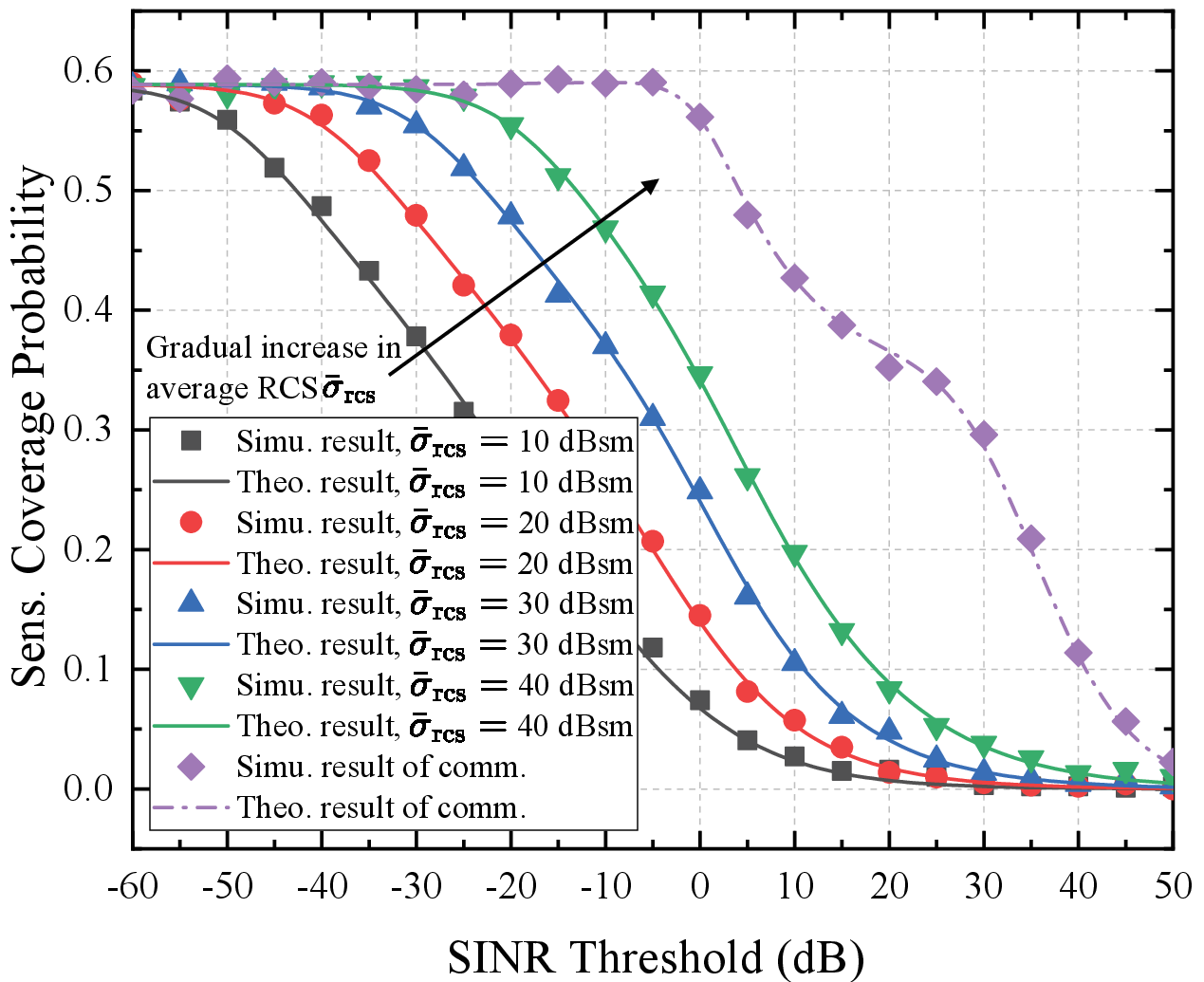}
    \caption{Variation of sensing coverage probability with detection threshold for different ST's RCS and comparison with that under communication coverage probability with the same parameters.}\label{fig2}
\end{figure}

In practical applications, the RCS of a target is not a fixed attribute, as it is subject to variations influenced by a multitude of factors, including the target's shape, surface material, angle of radar incidence, and polarization state. Nevertheless, during the system design phase, the theoretical frameworks and analytical methodologies presented herein can be leveraged, in conjunction with actual RCS data from urban environments, to conduct simulations. These simulations verify the differential effects of various design approaches on both target detection and communication capabilities. By doing so, system parameters can be preemptively optimized, thereby minimizing post-deployment tuning costs and expediting the achievement of optimal performance.

\subsection{Impact analysis of BS deployment density}
The trends of communication and sensing coverage probabilities with detection threshold for various BS deployment densities are depicted in Fig. \ref{fig3} and Fig. \ref{fig4}, revealing two unconventional phenomena.

Firstly, coverage holes only occur when the BS density falls below a certain threshold, but conversely,  at high BS densities, the coverage probability at low detection thresholds tends to approach $1$. Secondly, the lines on the graphs do not strictly follow a monotonic order, and intersections between the lines validate the varying performance of coverage probability across different BS deployment densities.

\begin{figure}[t]
    \centering
    \includegraphics[width=8.5cm]{./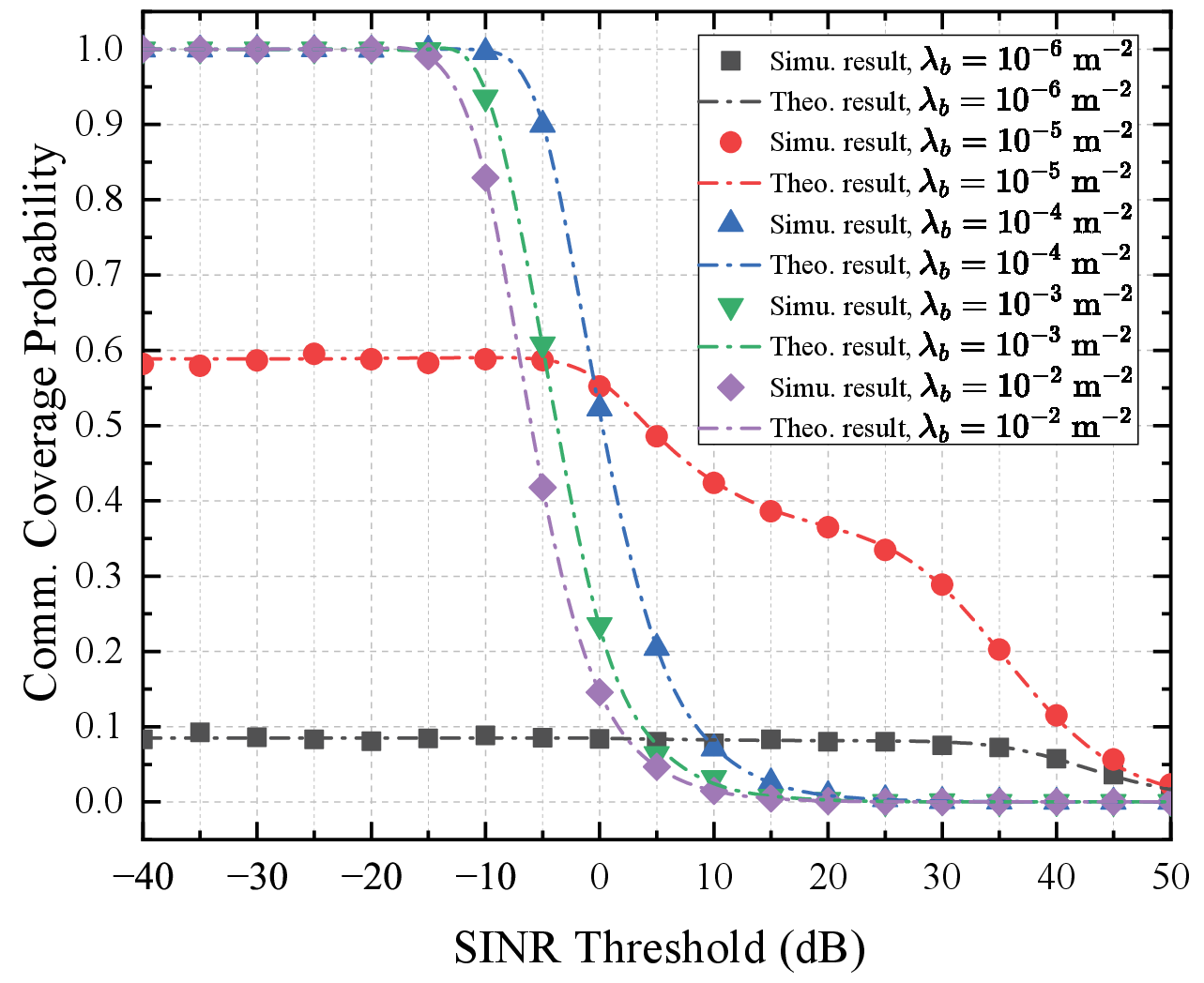}
    \caption{Variation of communication coverage probability with detection threshold for different BS deployment densities.}\label{fig3}
\end{figure}

\begin{figure}[t]
    \centering
    \includegraphics[width=8.5cm]{./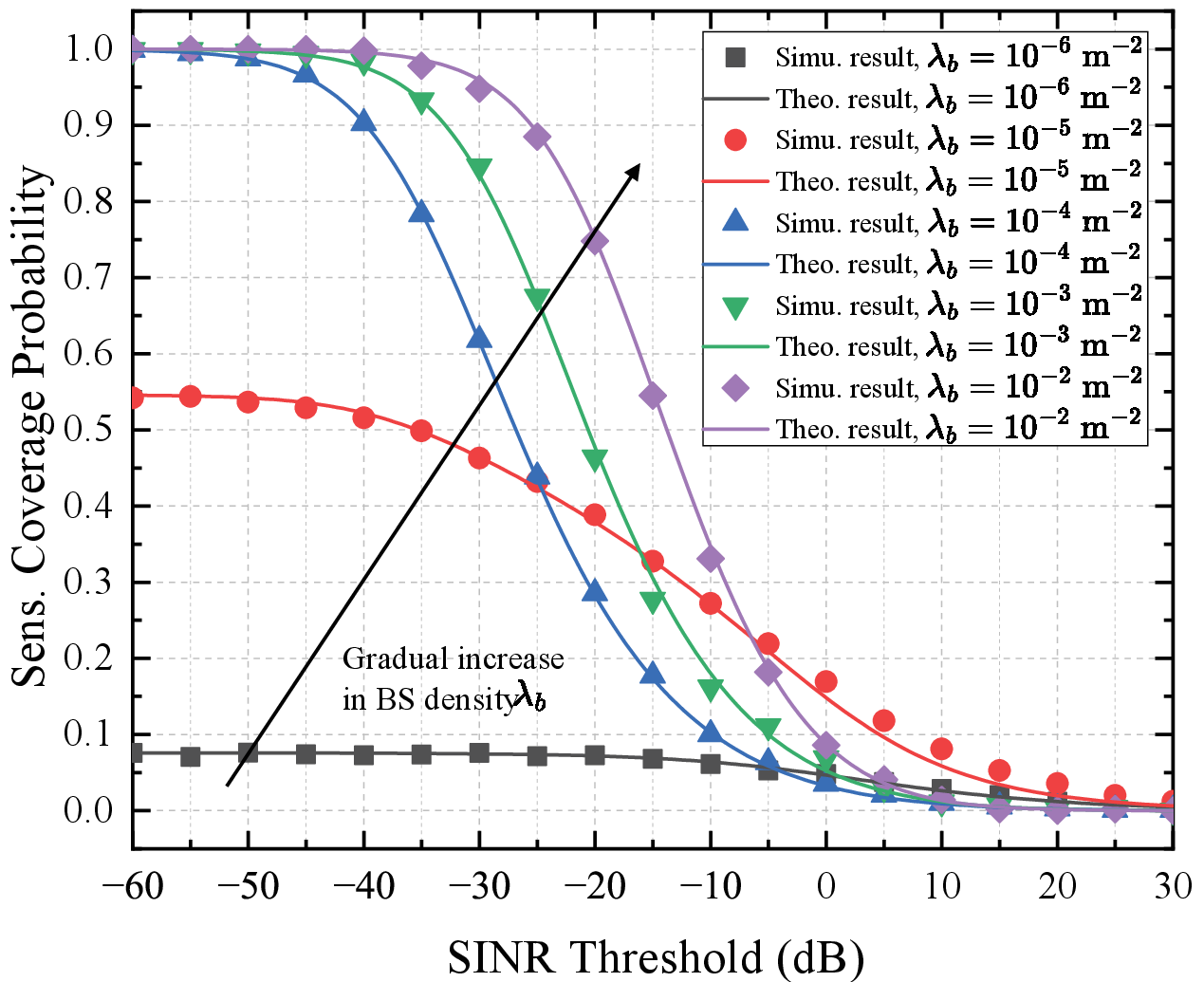}
    \caption{Variation of sensing coverage probability with detection threshold for different BS deployment densities.}\label{fig4}
\end{figure}

To further elucidate this variability, Fig. \ref{fig5} illustrates the variation of communication and sensing coverage probabilities with BS deployment density across different detection thresholds. It can be observed that the communication coverage probability demonstrates a fundamental trend of initial increase followed by decrease as the BS density changes. Due to the access strategy of the nearest visible access, as the BS density increases, the probability of users accessing closer visible BSs increases, leading to an increase in coverage probability. However, on the other hand, increasing the BS density also increases the number of visible interfering BSs. As the gain of useful signals slows down and the attenuation caused by interfering signals gradually becomes dominant, the coverage probability will decrease.

In contrast to the observed communication scenario, sensing coverage probability has two peaks, and the peak points for communication and sensing are not the same. The first peak in sensing occurs near the communication peak. The two peaks in sensing result from the competition between the useful signal and three interfering components (i.e. interference from LoS, NLoS and TRC links). However, unlike communication, the path loss of the useful signal is a two-way attenuation, so it is more slowly affected by density. Initially, as the BS density increases, the probability of accessing visible BSs increases, leading to an increase in sensing coverage probability. However, since LoS path loss is one-way, the attenuation caused by its interference becomes dominant earlier, resulting in the first peak in sensing, which occurs earlier than the peak in communication. Subsequently, the impact of LoS interference gradually diminishes, and the useful signal again becomes dominant, continuing to increase the coverage probability, resulting in the second peak in sensing coverage probability. Finally, interference from target reflection signals gradually becomes dominant, causing a slight decrease in coverage probability.

Under the current parameter settings, the main peaks of both communication and sensing are concentrated between $10^{-5}$ and $10^{-4}\;\mathrm{m}^{-2}$, which happens to be on the same order of magnitude as the blocking density. Therefore, through reasonable BS deployment and threshold settings, it is possible to simultaneously achieve good sensing coverage while ensuring communication coverage.

\begin{figure}[t]
    \centering
    \includegraphics[width=8.5cm]{./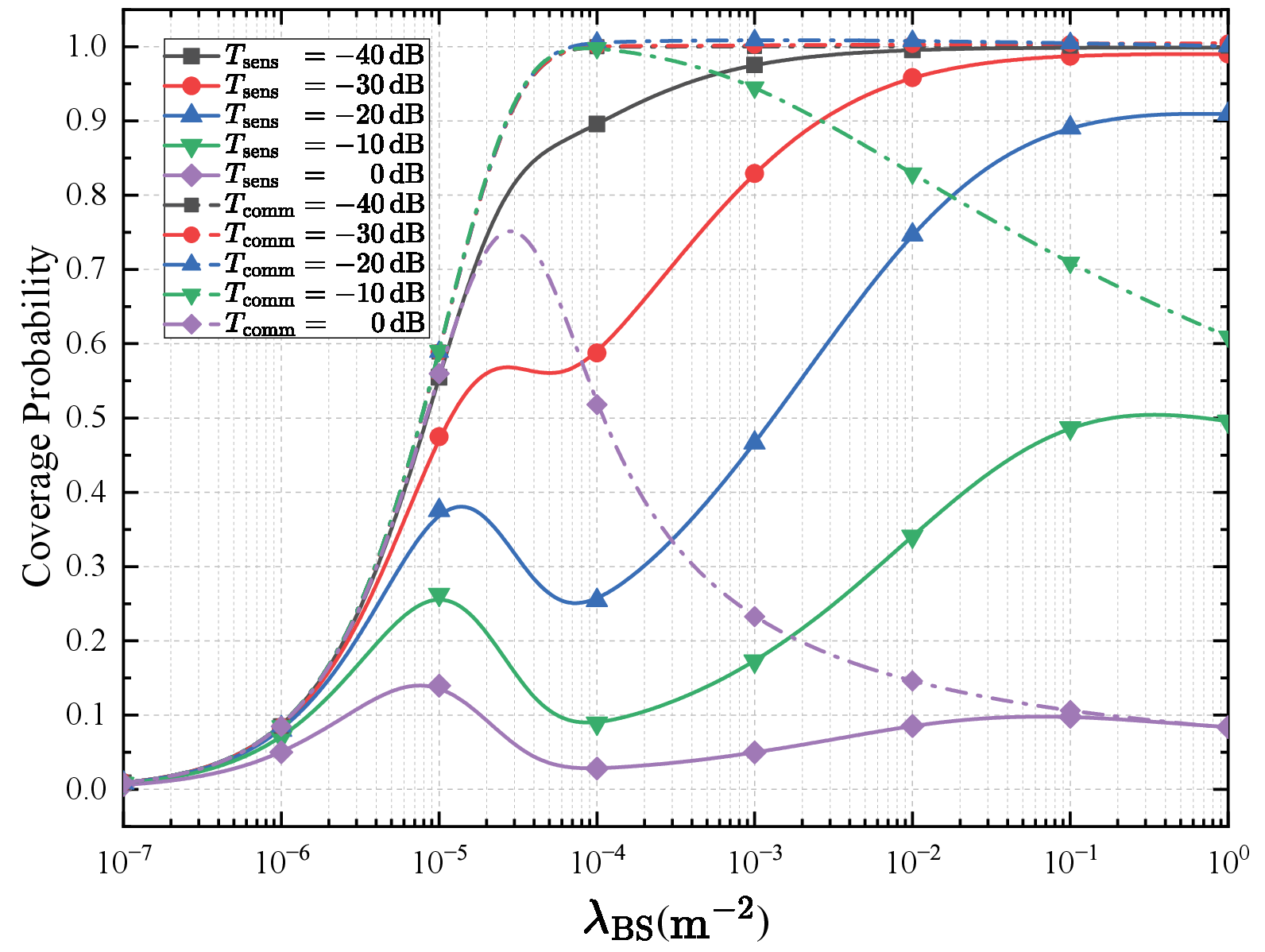}
    \caption{Variation of coverage probability with BS deployment density for different detection thresholds.}\label{fig5}
\end{figure}

\subsection{Impact analysis of environmental blockages}

Figs. \ref{fig6} and \ref{fig7} compare the variation of sensing and communication coverage probabilities with BS deployment density for blockage and non-blockage cases, respectively. Since closed-form solutions exist for Corollary 1 and Corollary 2 only when $\alpha_L>2$, the parameters are set to $\alpha_L=2.4$, $\alpha_R=4.8$ and the rest of the parameters are kept constant for the simulations. 

An intriguing observation emerges regarding the beneficial impact of blockages on both sensing and communication coverage. This phenomenon arises from that STs and UEs consistently access the nearest visible BS, resulting in the interfering link being longer than that of useful signals. Consequently, the longer the blocked link, the higher the likelihood of interference being blocked. Hence, the presence of blockages leads to a decrease in interference from LoS paths. This revelation introduces a novel perspective for the design of ISAC systems in intricate urban landscapes, suggesting that strategic manipulation of architectural layouts during the planning phase can be harnessed to boost network performance. For instance, in metropolitan settings, deliberately positioning BSs to take advantage of blockages for minimizing unwarranted interference constitutes an inventive design strategy. Furthermore, the deployment of BSs can be intelligently forecast and optimized using algorithms to harness the advantageous aspects of blockages while mitigating their detrimental effects. 

As the BS density approaches infinity, both sensing and communication coverage probabilities converge towards the non-blockage scenario, aligning with our previous conclusion. This convergence occurs because blockage was also modeled as a PPP independent from the BS. Therefore, increasing the BS density to infinity while keeping the blockage density fixed is akin to reducing the blockage density to zero. Since the impact of noise on coverage probability is minimal, Fig. \ref{fig6} already shows that communication coverage probability is independent of BS density in the non-blockage case, further confirming the conclusions of theoretical analysis.

\begin{figure}[t]
    \centering
    \includegraphics[width=8.5cm]{./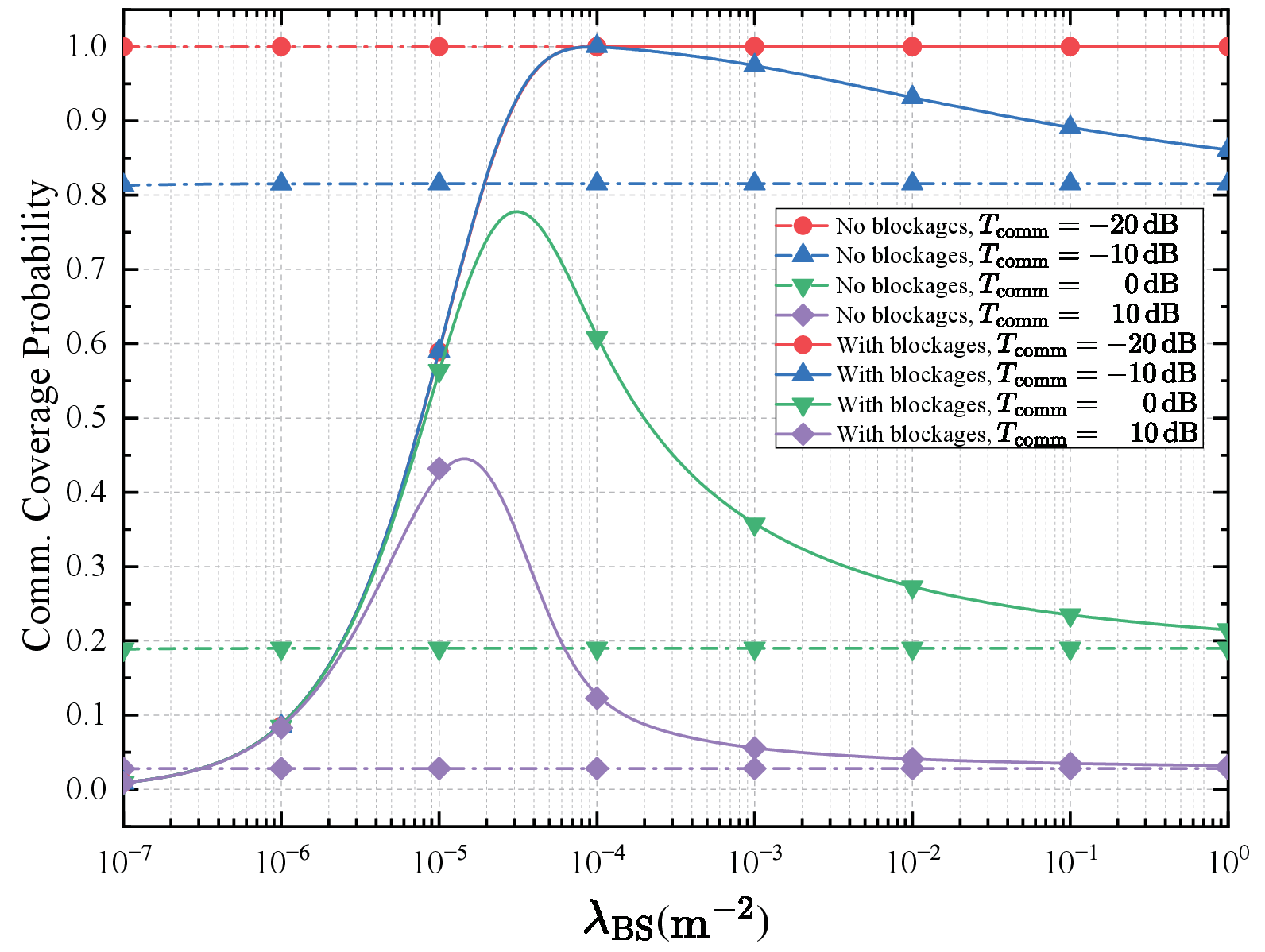}
    \caption{Variation of communication coverage probability with BS deployment density for blockage and non-blockage scenarios ($\alpha_L = 2.4$, $\alpha_N = 4.8$).}\label{fig6}
\end{figure}

\begin{figure}[t]
    \centering
    \includegraphics[width=8.5cm]{./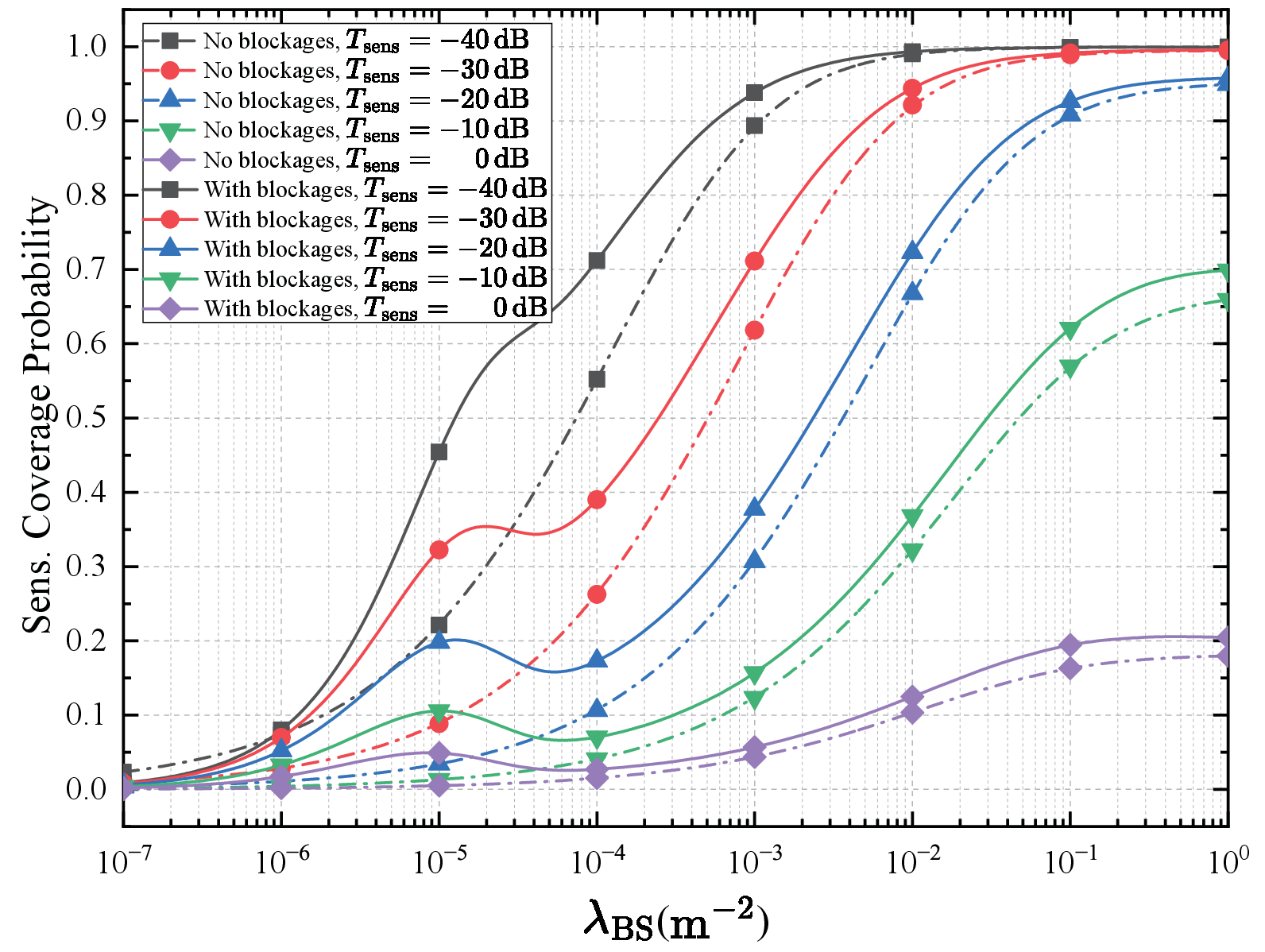}
    \caption{Variation of sensing coverage probability with BS deployment density for blockage and non-blockage scenarios ($\alpha_L = 2.4$, $\alpha_N = 4.8$).}\label{fig7}
\end{figure}

\section{Conclusion}
In this paper, a stochastic geometry approach is applied to conduct a comprehensive coverage probability performance analysis of ISAC networks, taking into account the inherent randomness in device deployment, signal propagation, target reflections, environmental blockage, and interference. Specifically, the ISAC networks in the urban scenarios is considered in the presence of blockages. Firstly, a unified and easily manageable system model is developed, which integrates the effects of LoS, NLoS, and target reflected paths. The possible blockage effects in ISAC systems are also considered, and their impacts on system performance are analyzed. Subsequently, expressions for communication and sensing SINR are separately provided based on this model, and the coverage probability is obtained as a function of BS and blockage densities, detection thresholds, and other system parameters. Lastly, special cases of no-blockage noise and no-blockage no-noise are examined. Influencing factors are isolated, and a more easily computable closed-form expression is derived. Simulation results validate the accuracy of theoretical findings and indicate that deploying more BSs is beneficial for improving communication and sensing coverage probabilities within a certain range.  An optimal BS density exists, which decreases as the detection threshold increases. Moreover, the presence of blockages produces a favorable gain in coverage probability. This approach provides valuable insights for designing and optimizing the performance of ISAC systems in real-world scenarios.

{\appendices
\section{Proof of Theorem 1}
\begin{AppendixProof}
    In this section, we give the proof of Theorem 1. According to the definition of the coverage probability by the Eq. \eqref{eq:pccommDef}, conditioning on the nearest visible BS being at a distance $R_0$ from the typical UE, the communication coverage probability can be obtained as
    \begin{equation} \label{eq:ProofPCComm1}
        \begin{aligned}
            &p_{\mathrm{c}}^{\mathrm{comm}}(T^{\mathrm{comm}},\lambda_{\mathrm{bs}},\alpha_L,\alpha_N,\beta ,p)
            \\
            &=\mathbb{E} _{R_0}[\mathbb{P} [\mathrm{SINR}^{\mathrm{comm}}>T^{\mathrm{comm}}]|R_0]
            \\
            &=\int_0^{\infty}{\mathbb{P} \left[ \frac{g_{L,0}^{\mathrm{comm}}\mathrm{PL}_L(R_0)}{I^{\mathrm{comm}}+N^{\mathrm{comm}}}>T^{\mathrm{comm}}|r \right] f_{R_0}(r)\mathrm{d}r}
            \\
            &=\int_0^{\infty}{\mathbb{P} \left[ g_{L,0}^{\mathrm{comm}}>\frac{T^{\mathrm{comm}}\left( I^{\mathrm{comm}}+N^{\mathrm{comm}} \right)}{k_Lr^{-\alpha _L}}|r \right] f_{R_0}(r)\mathrm{d}r},
        \end{aligned}
    \end{equation}
    where $g_{L,0}^{\mathrm{comm}}$ is the small-scale fading gain of the communication desired signal, and substitution into the CCDF approximation of the Rician distribution in Eq. \eqref{eq:RicianCCDF} leads to
    \begin{equation} \label{eq:ProofPCComm2}
        \begin{aligned}
            &\mathbb{P} \left[ g_{L,0}^{\mathrm{comm}}>\frac{T^{\mathrm{comm}}\left( I^{\mathrm{comm}}+N^{\mathrm{comm}} \right)}{k_Lr^{-\alpha _L}}|r \right]
            \\
            &\approx\mathbb{E} _{I^{\mathrm{comm}}}\left[ \sum_{n=1}^N{w_{n}^{K}}e^{-\frac{u_{n}^{K}r^{\alpha _L}T^{\mathrm{comm}}\left( I^{\mathrm{comm}}+N^{\mathrm{comm}} \right)}{k_L}} \right] 
            \\
            &\overset{(a)}{=}\sum_{n=1}^N{w_{n}^{K}e^{-\frac{u_{n}^{K}r^{\alpha _L}T^{\mathrm{comm}}N^{\mathrm{comm}}}{k_L}}}\mathcal{L} _{I_{\mathrm{comm}}}\left( \frac{u_{n}^{K}r^{\alpha _L}T^{\mathrm{comm}}}{k_L} \right),
        \end{aligned}
    \end{equation}
    where (a) follows from the definition of Laplace transform, and the Laplace function of $I^{\mathrm{comm}}$ then can be expressed as
    \begin{equation} \label{eq:LaplaceComm}
        \begin{aligned}
            \mathcal{L} _{I^{\mathrm{comm}}}\left( s \right) &=\int_0^{\infty}{e^{-sI^{\mathrm{comm}}}f\left( I^{\mathrm{comm}} \right)}\mathrm{d}I^{\mathrm{comm}}
            \\
            &=\mathbb{E} \left[ e^{-sI^{\mathrm{comm}}} \right] 
            \\
            &=\mathbb{E} \left[ e^{-s\left( I_{\mathrm{LoS}}^{\mathrm{comm}}+I_{\mathrm{NLoS}}^{\mathrm{comm}} \right)} \right] 
            \\
            &=\mathcal{L} _{I_{\mathrm{LoS}}^{\mathrm{comm}}}\left( s \right) \mathcal{L} _{I_{\mathrm{NLoS}}^{\mathrm{comm}}}\left( s \right).
        \end{aligned}
    \end{equation}

    Since the random variables $g^{\mathrm{comm}}_{L,i}$, $M_i$, and the point process $\Phi_{\mathrm{bs}}$ are independent of each other, the order of the expectations and the cumulative multiplication can be exchanged
    \begin{equation} \label{eq:LaplacecommLoS1}
        \begin{aligned}
            &\mathcal{L} _{I_{\mathrm{LoS}}^{\mathrm{comm}}}\left( s \right)
            \\
            &=\mathbb{E} _{\Phi_{\mathrm{bs}} ,\left\{ g_{L,i}^{\mathrm{comm}} \right\} ,\left\{ M_i \right\}}\left[ e^{-s\sum_{i\in \Phi_{\mathrm{bs}}}{g_{L,i}^{\mathrm{comm}}}\mathrm{PL}_L(R_i)M_i} \right] 
            \\
            &=\mathbb{E} _{\Phi_{\mathrm{bs}} ,\left\{ g_{L,i}^{\mathrm{comm}} \right\} ,\left\{ M_i \right\}}\left[ \prod_{i\in \Phi_{\mathrm{bs}}}{e^{-sg_{L,i}^{\mathrm{comm}}k_LR_{i}^{-\alpha _L}M_i}} \right] 
            \\
            &=\mathbb{E} _{\Phi_{\mathrm{bs}}}\Bigg[ \prod_{i\in \Phi_{\mathrm{bs}}}\mathbb{E} _{\left\{ M_i \right\}}\left[\mathbb{E} _{\left\{ g_{L,i}^{\mathrm{comm}} \right\}}\left[ e^{-sg_{L,i}^{\mathrm{comm}}k_LR_{i}^{-\alpha _L}M_i} \right] \right] 
            \\
            &\overset{\left(b \right)}{=}\mathbb{E} _{\Phi_{\mathrm{bs}}}\Bigg[ \prod_{i\in \Phi_{\mathrm{bs}}}\mathbb{E} _{\left\{ g_{L,i}^{\mathrm{comm}} \right\}}\left[ e^{-sg_{L,i}^{\mathrm{comm}}k_LR_{i}^{-\alpha _L}} \right] 
            \\
            &\qquad\qquad\qquad\quad\times\mathrm{Pr}_L(R_i)+\mathrm{Pr}_N(R_i) \Bigg],
        \end{aligned}
    \end{equation}
    where (b) uses $M_i\sim \mathrm{Bernoulli}(\mathrm{Pr}_L(R_i))$. Further, the small-scale fading of the LoS path is modeled as Rician fading, whereupon the expectation of the e-exponent with respect to $g_{L,i}^{\mathrm{comm}}$ can be calculated as
    \begin{equation} \label{eq:LaplacecommLoS2}
        \begin{aligned}
            \mathbb{E} _{\left\{ g_{L,i}^{\mathrm{comm}} \right\}}&\left[ e^{-sg_{L,i}^{\mathrm{comm}}k_LR_{i}^{-\alpha _L}}\right] 
            \\
            &=\int_0^{\infty}{e^{-sxk_L{R}_{i}^{-\alpha _L}}f_{g_{L,i}^{\mathrm{comm}}}\left( x \right)}\mathrm{d}x
            \\
            &\overset{\left( c \right)}{\approx}\sum_{n=0}^{N} w_n^K  \int_0^{\infty}{u_n^Ke^{-sxk_L{R}_{i}^{-\alpha _L}}e^{-u_n^Kx}}\mathrm{d}x
            \\
            &=\sum_{n=1}^N{w_{n}^{K}\frac{u_{n}^{K}}{u_{n}^{K}+sk_L{R}_{i}^{-\alpha _L}}},
        \end{aligned}
    \end{equation}
    where (c) is obtained using the approximate PDF of $g_{L,i}^{\mathrm{comm}}$ in Eq. \eqref{eq:RicianPDFApprox}. Substituting Eq. \eqref{eq:LaplacecommLoS2} into Eq. \eqref{eq:LaplacecommLoS1} yields that
    \begin{equation*} \label{eq:LaplacecommLoS3}
        \begin{aligned}
            &\mathcal{L} _{I_{\mathrm{LoS}}^{\mathrm{comm}}}\left( s \right)
            \\
            &\approx\mathbb{E} _{\Phi_{\mathrm{bs}}}\left[ \prod_{i\in \Phi_{\mathrm{bs}}}\left[ \sum_{n=1}^N{w_{n}^{K}\frac{u_{n}^{K}\mathrm{Pr}_L({R}_i)}{u_{n}^{K}+sk_L{R}_{i}^{-\alpha _L}}+\mathrm{Pr}_N({R}_i)}\right] \right] 
        \end{aligned}
    \end{equation*}
    \begin{equation}
        \begin{aligned}
            &\overset{\left( d \right)}{=}\exp \Bigg[ -2\pi \lambda_{\mathrm{bs}} \int_r^{\infty}x\Bigg[1 -\sum_{n=1}^N{w_{n}^{K}\frac{u_{n}^{K}\mathrm{Pr}_L(x)}{u_{n}^{K}+sk_Lx^{-\alpha _L}}} 
            \\
            &\qquad\qquad\qquad\quad-\mathrm{Pr}_N(x)\Bigg] \mathrm{d} x \Bigg]
            \\ 
            &=\exp \left[ -2\pi \lambda_{\mathrm{bs}} \sum_{n=1}^Nw_{n}^{K}\int_r^{\infty}x\left(\frac{sk_L\mathrm{Pr}_L(x)}{u_{n}^{K}x^{\alpha _L}+sk_L}\right) \mathrm{d} x \right]
            \\ 
            &=\exp \left[- 2\pi \lambda_{\mathrm{bs}} \sum_{n=1}^N{w_{n}^{K}\mathbb{F}\left( \frac{u_{n}^{K}}{sk_L},\alpha _L,\mathrm{Pr}_L(x),r  \right)}\right],
        \end{aligned}
    \end{equation}
    where (d) uses the probability generation functional  \cite{andrewsTractableApproachCoverage2011} (PGFL) of PPP, which states for some function $f(x)$ that
    \begin{equation} \label{eq:PGFL}
        \begin{aligned}
            \mathbb{E} _{\Phi_{\mathrm{bs}}}\left[ \prod_{i\in \Phi_{\mathrm{bs}}}{f\left( x \right)} \right] =\exp \left[ -\int_{\mathbb{R} ^2}{\left( 1-f\left( x \right) \right) \lambda_{\mathrm{bs}} \left( \mathrm{d}x \right)} \right].
        \end{aligned}
    \end{equation}
    
    Since the associated BS is the nearest visible BS and there will be no other visible BSs closer than the associated BS, the integral starts at $r$ when using PGFL. The last step in the derivation defines a new function $\mathbb{F}(\cdot)$ which can be expressed as
    \begin{equation} \label{eq:F}
        \begin{aligned}
            \mathbb{F}\left( \varepsilon ,\alpha ,p(x),h \right) =\int_h^{\infty}{\frac{xp(x)}{\varepsilon x^{\alpha}+1}\mathrm{d}x}.
        \end{aligned}
    \end{equation}

    Similarly, for the Laplace transform of the NLoS interference, since the random variables $g^{\mathrm{comm}}_{N,i}$, ${S}_i$, and the point process $\Phi_{\mathrm{bs}}$ are independent of each other, we have
    \begin{equation} \label{eq:LaplacecommNLoS1}
        \begin{aligned}
            &\mathcal{L} _{I_{\mathrm{NLoS}}^{\mathrm{comm}}}\left( s \right) 
            \\
            &=\mathbb{E} _{\Phi_{\mathrm{bs}} ,\left\{ g_{N,i}^{\mathrm{comm}} \right\} ,\left\{ M_i \right\}}\left[ e^{-s\sum_{i\in \Phi_{\mathrm{bs}}}{g_{N,i}^{\mathrm{comm}}}\mathrm{PL}_N(R_i)(1-M_i)} \right] 
            \\
            &=\mathbb{E} _{\Phi_{\mathrm{bs}} ,\left\{ g_{N,i}^{\mathrm{comm}} \right\} ,\left\{ M_i \right\}}\left[ \prod_{i\in \Phi_{\mathrm{bs}}}{e^{-sg_{N,i}^{\mathrm{comm}}k_NR_{i}^{-\alpha _N}(1-M_i)}} \right] 
            \\
            &=\mathbb{E} _{\Phi_{\mathrm{bs}}}\Bigg[ \prod_{i\in \Phi_{\mathrm{bs}}}\mathbb{E} _{\left\{ M_i \right\}}\left[\mathbb{E} _{\left\{ g_{N,i}^{\mathrm{comm}} \right\}}\left[ e^{-sg_{N,i}^{\mathrm{comm}}k_NR_{i}^{-\alpha _N}(1-M_i)} \right] \right] 
            \\
            &\overset{\left( e \right)}{=}\mathbb{E} _{\Phi_{\mathrm{bs}}}\Bigg[ \prod_{i\in \Phi_{\mathrm{bs}}}\Big[\mathbb{E} _{\left\{ g_{N,i}^{\mathrm{comm}} \right\}}\left[ e^{-sg_{N,i}^{\mathrm{comm}}k_NR_{i}^{-\alpha _N}} \right] 
            \\
            &\qquad\qquad\qquad\quad\times\mathrm{Pr}_N(R_i) +\mathrm{Pr}_L(R_i) \Big]\Bigg] ,
        \end{aligned}
    \end{equation}
    where (e) uses $M_i\sim \mathrm{Bernoulli}(\mathrm{Pr}_L(R_i))$. The small-scale fading of the NLoS path is modeled as Rayleigh fading, after that the expectation with respect to $g_{N,i}^{\mathrm{comm}}$ can be formulated as follows
    \begin{equation} \label{eq:LaplacecommNLoS2}
        \begin{aligned}
            \mathbb{E} _{\left\{ g_{N,i}^{\mathrm{comm}} \right\}}&\left[ e^{-sg_{N,i}^{\mathrm{comm}}k_NR_{i}^{-\alpha _N}}\right]
            \\
            &=\int_0^{\infty}{e^{-sxk_N{R}_{i}^{-\alpha _N}}f_{g_{N,i}^{\mathrm{comm}}}\left( x \right)}\mathrm{d}x
            \\
            &\overset{\left( f \right)}{=}\mu_{N}^{\mathrm{comm}}\int_0^{\infty}{e^{-x\left( \mu_{N}^{\mathrm{comm}}+{sk_NR_{i}^{-\alpha _N}} \right)}}\mathrm{d}x
            \\
            &=\frac{\mu_{N}^{\mathrm{comm}}}{\mu_{N}^{\mathrm{comm}}+sk_NR_{i}^{-\alpha _N}},
        \end{aligned}
    \end{equation}
    where (f) holds because Eq. \eqref{eq:RayleighPDF} and $\mu^{\mathrm{comm}}_{N}$ is the Rayleigh fading parameters for the communication NLoS interference path. Hence,
    \begin{equation} \label{eq:LaplacecommNLoS3}
        \begin{aligned}
            &\mathcal{L} _{I_{\mathrm{NLoS}}^{\mathrm{comm}}}\left( s \right) 
            \\
            &=\mathbb{E} _{\Phi_{\mathrm{bs}}}\left[ \prod_{i\in \Phi_{\mathrm{bs}}}{\left[ \frac{\mu_{N}^{\mathrm{comm}}\mathrm{Pr}_N(R_i)}{\mu_{N}^{\mathrm{comm}}+sk_NR_{i}^{-\alpha _N}}+\mathrm{Pr}_L(R_i) \right]} \right] 
            \\
            &\overset{\left( g \right)}{=}\exp \left[ -2\pi \lambda_{\mathrm{bs}} \int_0^{\infty}{x\left( \frac{sk_N\mathrm{Pr}_N(x) }{\mu_{N}^{\mathrm{comm}}x^{\alpha _N}+sk_N} \right) \mathrm{d}x} \right] 
            \\
            &=\exp \left[ -2\pi \lambda_{\mathrm{bs}} \mathbb{F}\left( \frac{\mu_{N}^{\mathrm{comm}}}{sk_N},\alpha _N,\mathrm{Pr}_N(x),0 \right) \right],
        \end{aligned}
    \end{equation}
    where (g) uses the PGFL, but the integration starts from 0 because there may be other BSs without direct paths within a circle of radius $r$ under the nearest visible access policy.

    Multiply the results of \eqref{eq:LaplacecommLoS3} and \eqref{eq:LaplacecommNLoS3}, then substitute them into \eqref{eq:ProofPCComm2} and then into \eqref{eq:ProofPCComm1} yields the proof.
\end{AppendixProof}

\section{Proof of Theorem 2}
\begin{AppendixProof}
    In this section, we give the proof of Theorem 2. According to the definition of the coverage probability by the Eq. \eqref{eq:pcsensDef}, conditioning on the nearest visible BS being at a distance $\tilde{R}_0$ from the typical ST, the sensing coverage probability can be calculated as
    \begin{equation} \label{eq:ProofPCsens}
        \begin{aligned}
            &p_{\mathrm{c}}^{\mathrm{sens}}(T^{\mathrm{sens}},\lambda_{\mathrm{bs}},\alpha_L,\alpha_N,\alpha_R,\beta ,p)
            \\
            &=\mathbb{E} _{\tilde{R}_0}[\mathbb{P} [\mathrm{SINR}^{\mathrm{sens}}>T^{\mathrm{sens}}]|\tilde{R}_0]
            \\
            &=\int_0^{\infty}{\mathbb{P}}[\mathrm{SINR}^{\mathrm{sens}}>T^{\mathrm{sens}}]f_{\tilde{R}_0}(r)\mathrm{d}r
            \\
            &=\int_0^{\infty}{\mathbb{P} \left[ \frac{\sigma _{\mathrm{rcs}}\mathrm{PL}_R(\tilde{R}_0)}{I^{\mathrm{sens}}+N^{\mathrm{sens}}}>T^{\mathrm{sens}}|r \right] f_{\tilde{R}_0}(r)\mathrm{d}r}
            \\
            &=\int_0^{\infty}{\mathbb{P} \left[ \sigma _{\mathrm{rcs}}>\frac{T^{\mathrm{sens}}\left( I^{\mathrm{sens}}+N^{\mathrm{sens}} \right)}{k_Rr^{-\alpha _R}}|r \right] f_{\tilde{R}_0}(r)\mathrm{d}r}
            \\
            &\overset{(a)}{=}\int_0^{\infty}{\mathbb{E} _{I^{\mathrm{sens}}}\left[ e^{-\frac{r^{\alpha _R}T^{\mathrm{sens}}\left( I^{\mathrm{sens}}+N^{\mathrm{sens}} \right)}{\bar{\sigma}_{\mathrm{rcs}}k_R}} \right] f_{\tilde{R}_0}(r)\mathrm{d}r}
            \\
            &\overset{(b)}{=}\int_0^{\infty}{e^{-\frac{r^{\alpha _R}T^{\mathrm{sens}}N^{\mathrm{sens}}}{\bar{\sigma}_{\mathrm{rcs}}k_R}}\mathcal{L} _{I^{\mathrm{sens}}}\left( \frac{r^{\alpha _R}T^{\mathrm{sens}}}{\bar{\sigma}_{\mathrm{rcs}}k_R} \right) f_{\tilde{R}_0}(r)\mathrm{d}r},
        \end{aligned}
    \end{equation}
    where (a) follows from the CCDF of the RCS in Eq. \eqref{eq:rcsCCDF}, (b) follows from the definition of Laplace transform. With the considered interference model in Eq. \eqref{eq:Isens}, the Laplace function of $I^{\mathrm{sens}}$ then can be expressed as
    \begin{equation} \label{eq:LaplaceSens}
        \begin{aligned}
            \mathcal{L} _{I^{\mathrm{sens}}}\left( s \right) &=\int_0^{\infty}{e^{-sI^{\mathrm{sens}}}f\left( I^{\mathrm{sens}} \right)}\mathrm{d}I^{\mathrm{sens}}
            \\
            &=\mathbb{E} \left[ e^{-sI^{\mathrm{sens}}} \right] 
            \\
            &=\mathbb{E} \left[ e^{-s\left( I_{\mathrm{LoS}}^{\mathrm{sens}}+I_{\mathrm{NLoS}}^{\mathrm{sens}}+I_{\mathrm{TRC}}^{\mathrm{sens}} \right)} \right] 
            \\
            &=\mathcal{L} _{I_{\mathrm{LoS}}^{\mathrm{sens}}}\left( s \right) \mathcal{L} _{I_{\mathrm{NLoS}}^{\mathrm{sens}}}\left( s \right) \mathcal{L} _{I_{\mathrm{TRC}}^{\mathrm{sens}}}\left( s \right) .
        \end{aligned}
    \end{equation}

    Since the random variables $g^{\mathrm{sens}}_{L,i}$, $\hat{M}_i$ and the point process $\Phi_{\mathrm{bs}}$ are independent of each other, similar to the proof process for Theorem 1, we have
    \begin{equation} \label{eq:LaplaceSensLoS1}
        \begin{aligned}
            &\mathcal{L} _{I_{\mathrm{LoS}}^{\mathrm{sens}}}\left( s \right)
            \\ 
            &=\mathbb{E} _{\Phi_{\mathrm{bs}} ,\left\{ g_{L,i}^{\mathrm{sens}} \right\} ,\left\{ \hat{M}_i \right\}}\left[ e^{-s\sum_{i\in \Phi_{\mathrm{bs}}}{g_{L,i}^{\mathrm{sens}}}\mathrm{PL}_L(\hat{R}_i)\hat{M}_i} \right] 
            \\
            &=\mathbb{E} _{\Phi_{\mathrm{bs}} ,\left\{ g_{L,i}^{\mathrm{sens}} \right\} ,\left\{ \hat{M}_i \right\}}\left[ \prod_{i\in \Phi_{\mathrm{bs}}}{e^{-sg_{L,i}^{\mathrm{sens}}k_L\hat{R}_{i}^{-\alpha _L}\hat{M}_i}} \right] 
            \\
            &\overset{(c)}{=}\mathbb{E} _{\Phi_{\mathrm{bs}}}\Bigg[ \prod_{i\in \Phi_{\mathrm{bs}}}\Big[\mathbb{E} _{\left\{ g_{L,i}^{\mathrm{sens}} \right\}}\left[ e^{-sg_{L,i}^{\mathrm{sens}}k_L\hat{R}_{i}^{-\alpha _L}} \right] 
            \\
            &\qquad\qquad\qquad\quad\times \mathrm{Pr}_L(\hat{R}_i) + \mathrm{Pr}_N(\hat{R}_i) \Big]\Bigg] ,
        \end{aligned}
    \end{equation}
    where (c) uses $\hat{M}_i\sim \mathrm{Bernoulli}(\mathrm{Pr}_L(\hat{R}_i))$. The small-scale fading of the LoS path is modeled as Rician fading, so the expectation in Eq. \eqref{eq:LaplaceSensLoS1} with respect to $g_{L,i}^{\mathrm{sens}}$ can be computed as
    \begin{equation} \label{eq:LaplaceSensLoS2}
        \begin{aligned}
            \mathbb{E} _{\left\{ g_{L,i}^{\mathrm{sens}} \right\}}&\left[ e^{-sg_{L,i}^{\mathrm{sens}}k_L\hat{R}_{i}^{-\alpha _L}}\right] 
            \\
            &=\int_0^{\infty}{e^{-sxk_L\hat{R}_{i}^{-\alpha _L}}f_{g_{L,i}^{\mathrm{sens}}}\left( x \right)}\mathrm{d}x
            \\
            &\overset{\left( d \right)}{\approx}\sum_{n=0}^{N} w_n^K  \int_0^{\infty}{u_n^Ke^{-sxk_L\hat{R}_{i}^{-\alpha _L}}e^{-u_n^Kx}}\mathrm{d}x
            \\
            &=\sum_{n=1}^N{w_{n}^{K}\frac{u_{n}^{K}}{u_{n}^{K}+sk_L\hat{R}_{i}^{-\alpha _L}}},
        \end{aligned}
    \end{equation}
    where (d) is obtained using the approximate PDF of $g_{L,i}^{\mathrm{sens}}$ in Eq. \eqref{eq:RicianPDFApprox}. Substituting Eq. \eqref{eq:LaplaceSensLoS2} into Eq. \eqref{eq:LaplaceSensLoS1} yields that
    \begin{equation} \label{eq:LaplacesensLoS3}
        \begin{aligned}
            &\mathcal{L} _{I_{\mathrm{LoS}}^{\mathrm{sens}}}\left( s \right)
            \\
            &\approx\mathbb{E} _{\Phi_{\mathrm{bs}}}\left[ \prod_{i\in \Phi_{\mathrm{bs}}}\left[ \sum_{n=1}^N{w_{n}^{K}\frac{u_{n}^{K}\mathrm{Pr}_L(\hat{R}_i)}{u_{n}^{K}+sk_L\hat{R}_{i}^{-\alpha _L}}+\mathrm{Pr}_N(\hat{R}_i)}\right] \right] 
            \\
            &\overset{\left( e \right)}{=}\exp \Bigg[ -2\pi \lambda_{\mathrm{bs}} \int_r^{\infty}x\Bigg[1 -\sum_{n=1}^N{w_{n}^{K}\frac{u_{n}^{K}\mathrm{Pr}_L(x)}{u_{n}^{K}+sk_Lx^{-\alpha _L}}} 
            \\
            &\qquad\qquad\qquad\quad-\mathrm{Pr}_N(x)\Bigg] \mathrm{d} x \Bigg]
            \\ 
            &=\exp \left[ -2\pi \lambda_{\mathrm{bs}} \sum_{n=1}^Nw_{n}^{K}\int_r^{\infty}x\left(\frac{sk_L\mathrm{Pr}_L(x)}{u_{n}^{K}x^{\alpha _L}+sk_L}\right) \mathrm{d} x \right]
            \\ 
            &=\exp \left[- 2\pi \lambda_{\mathrm{bs}} \sum_{n=1}^N{w_{n}^{K}\mathbb{F}\left( \frac{u_{n}^{K}}{sk_L},\alpha _L,\mathrm{Pr}_L(x),r  \right)}\right],
        \end{aligned}
    \end{equation}
    where (e) uses the PGFL of PPP, $\mathbb{F}\left( \varepsilon ,\alpha ,p(x),h \right) =\int_h^{\infty}{\frac{xp(x)}{\varepsilon x^{\alpha}+1}\mathrm{d}x}$. Note that when using PGFL, the lower bound of the integral starts at $r$ rather than at $0$. This is because sensing is associated with the nearest visible BS, and no other visible BS will exist within a circle with radius $r$ and center ST. This fraction needs to be shaved from the plane when calculating LoS interference.

    Similarly, Laplace transform of NLoS interference links can be derived as
    \begin{equation} \label{eq:LaplaceSensNLoS}
        \begin{aligned}
            &\mathcal{L} _{I_{\mathrm{NLoS}}^{\mathrm{sens}}}\left( s \right) 
            \\
            &=\mathbb{E} _{\Phi_{\mathrm{bs}} ,\left\{ g_{N,i}^{\mathrm{sens}} \right\} ,\left\{ \hat{M}_i \right\}}\left[ e^{-s\sum_{i\in \Phi_{\mathrm{bs}}}{g_{N,i}^{\mathrm{sens}}}\mathrm{PL}_N(\hat{R}_i)(1-\hat{M}_i)} \right] 
            \\
            &=\mathbb{E} _{\Phi_{\mathrm{bs}} ,\left\{ g_{N,i}^{\mathrm{sens}} \right\} ,\left\{ \hat{M}_i \right\}}\left[ \prod_{i\in \Phi_{\mathrm{bs}}}{e^{-sg_{N,i}^{\mathrm{sens}}k_N\hat{R}_{i}^{-\alpha _N}(1-\hat{M}_i)}} \right] 
            \\
            &\overset{\left( f \right)}{=}\mathbb{E} _{\Phi_{\mathrm{bs}}}\Bigg[ \prod_{i\in \Phi_{\mathrm{bs}}}\Big[\mathbb{E} _{\left\{ g_{N,i}^{\mathrm{sens}} \right\}}\left[ e^{-sg_{N,i}^{\mathrm{sens}}k_N\hat{R}_{i}^{-\alpha _N}} \right] 
            \\
            &\qquad\qquad\qquad\quad\times \mathrm{Pr}_N(\hat{R}_i)  +\mathrm{Pr}_L(\hat{R}_i) \Big]\Bigg] ,
        \end{aligned}
    \end{equation}
    where (f) holds because $\hat{M}_i\sim \mathrm{Bernoulli}(\mathrm{Pr}_L(\hat{R}_i))$, and $g^{\mathrm{sens}}_{N,i}$, $\hat{M}_i$ and the point process $\Phi_{\mathrm{bs}}$ are independent of each other. The expectation with respect to $g_{N,i}^{\mathrm{sens}}$ can be formulated as follows
    \begin{equation} \label{eq:LaplacesensNLoS2}
        \begin{aligned}
            \mathbb{E} _{\left\{ g_{N,i}^{\mathrm{sens}} \right\}}&\left[ e^{-sg_{N,i}^{\mathrm{sens}}k_NR_{i}^{-\alpha _N}}\right]
            \\
            &=\int_0^{\infty}{e^{-sxk_N{R}_{i}^{-\alpha _N}}f_{g_{N,i}^{\mathrm{sens}}}\left( x \right)}\mathrm{d}x
            \\
            &\overset{\left(g \right)}{=}\mu_{N}^{\mathrm{sens}}\int_0^{\infty}{e^{-x\left( \mu_{N}^{\mathrm{sens}}+{sk_NR_{i}^{-\alpha _N}} \right)}}\mathrm{d}x
            \\
            &=\frac{\mu_{N}^{\mathrm{sens}}}{\mu_{N}^{\mathrm{sens}}+sk_NR_{i}^{-\alpha _N}},
        \end{aligned}
    \end{equation}
    where (g) holds because the the PDF of $g_{N,i}^{\mathrm{sens}} $ in Eq. \eqref{eq:RayleighPDF}. Hence,
    \begin{equation} \label{eq:LaplaceSensNLoS3}
        \begin{aligned}
            &\mathcal{L} _{I_{\mathrm{NLoS}}^{\mathrm{sens}}}\left( s \right) 
            \\
            &=\mathbb{E} _{\Phi_{\mathrm{bs}}}\left[ \prod_{i\in \Phi_{\mathrm{bs}}}{\left[ \frac{\mu_{N}^{\mathrm{sens}}\mathrm{Pr}_N(\hat{R}_i)}{\mu_{N}^{\mathrm{sens}}+sk_N\hat{R}_{i}^{-\alpha _N}}+\mathrm{Pr}_L(\hat{R}_i) \right]} \right] 
            \\
            &\overset{\left( h \right)}{=}\exp \left[ -2\pi \lambda_{\mathrm{bs}} \int_0^{\infty}{x\left( \frac{sk_N\mathrm{Pr}_N(x)}{\mu_{N}^{\mathrm{sens}}x^{\alpha _N}+sk_N} \right) \mathrm{d}x} \right]
            \\
            &=\exp \left[ -2\pi \lambda_{\mathrm{bs}} \mathbb{F}\left( \frac{\mu_{N}^{\mathrm{sens}}}{sk_N},\alpha _N,\mathrm{Pr}_N(x),0 \right) \right] ,
        \end{aligned}
    \end{equation}
    where (h) uses the PGFL. The integration starts from $0$ because the association policy is only relevant for BSs where LoS is present and does not affect NLoS BSs.

    In the same vein,
    \begin{equation*}
        \begin{aligned}\label{eq:LaplaceSensObj}
            &\mathcal{L} _{I_{\mathrm{TRC}}^{\mathrm{sens}}}\left( s \right) 
            \\
            &=\mathbb{E} _{\Phi_{\mathrm{bs}} ,\sigma _{\mathrm{rcs}},\left\{ \tilde{M}_i \right\}}\left[ e^{-s\sum_{i\in \Phi_{\mathrm{bs}}}{\sigma _{\mathrm{rcs}}k_R\tilde{R}_{i}^{-\alpha _L}\tilde{M}_i\times \tilde{R}_{0}^{-\alpha _L}\tilde{M}_0}} \right] 
        \end{aligned}
    \end{equation*}

    \begin{equation}
        \begin{aligned}
            &=\mathbb{E} _{\Phi_{\mathrm{bs}} ,\sigma _{\mathrm{rcs}},\left\{ \tilde{M}_i \right\}}\left[ \prod_{i\in \Phi_{\mathrm{bs}}}{e^{-s\sigma _{\mathrm{rcs}}k_R\tilde{R}_{i}^{-\alpha _L}\tilde{M}_i\times \tilde{R}_{0}^{-\alpha _L}\tilde{M}_0}} \right]
            \\
            &\overset{\left( i \right)}{=}\mathbb{E} _{\Phi_{\mathrm{bs}}}\Bigg[ \prod_{i\in \Phi_{\mathrm{bs}}}\bigg[ \mathbb{E} _{\sigma _{\mathrm{rcs}}}\left[ e^{-sk_R\sigma _{\mathrm{rcs}}\tilde{R}_{i}^{-\alpha _L}\tilde{R}_{0}^{-\alpha _L}} \right] \mathrm{Pr}_L(\tilde{R}_i)
            \\
            &\qquad\qquad\qquad\quad+\mathrm{Pr}_N(\tilde{R}_i) \bigg] \Bigg] 
            \\
            &\overset{\left( j \right)}{=}\mathbb{E} _{\Phi_{\mathrm{bs}}}\left[ \prod_{i\in \Phi_{\mathrm{bs}}}{\left[ \frac{\mathrm{Pr}_L(\tilde{R}_i)}{1+sk_R\bar{\sigma}_{\mathrm{rcs}}\tilde{R}_{i}^{-\alpha _L}\tilde{R}_{0}^{-\alpha _L}}+\mathrm{Pr}_N(\tilde{R}_i) \right]} \right] 
            \\
            &\overset{\left( k \right)}{=}\exp \left[ -2\pi \lambda_{\mathrm{bs}} \int_r^{\infty}{x\left(\frac{sk_R\mathrm{Pr}_L(x)}{\bar{\sigma}_{\mathrm{rcs}}^{-1}x^{\alpha _L}r^{\alpha _L}+sk_R}\right)\mathrm{d}x} \right]  
            \\
            &=\exp \left[ -2\pi \lambda_{\mathrm{bs}} \mathbb{F}\left( \frac{r^{\alpha _L}}{sk_R\bar{\sigma}_{\mathrm{rcs}}},\alpha _L,\mathrm{Pr}_L(x),r \right) \right] ,
        \end{aligned}
    \end{equation}
    where (i) holds because $\tilde{M}_i\sim \mathrm{Bernoulli}(\mathrm{Pr}_L(\tilde{R}_i))$ and $\tilde{M}_0=1$, and (j) uses the CCDF of RCS in Eq \eqref{eq:rcsCCDF}. Since the sensing association strategy is the nearest visible BS, no other visible BS will exist in the circle centered on ST with radius $r$. Therefore, when PGFL is used in step (k), the integral is calculated from $r$.

    Plugging \eqref{eq:LaplacesensLoS3}, \eqref{eq:LaplaceSensNLoS3} and \eqref{eq:LaplaceSensObj} into \eqref{eq:LaplaceSens} and then into \eqref{eq:ProofPCsens} gives the desired result.
\end{AppendixProof}
}

\bibliographystyle{IEEEtran}

\bibliography{IEEEabrv,./References.bib}
\begin{IEEEbiography}[{\includegraphics[width=1in,height=1.25in,clip,keepaspectratio]{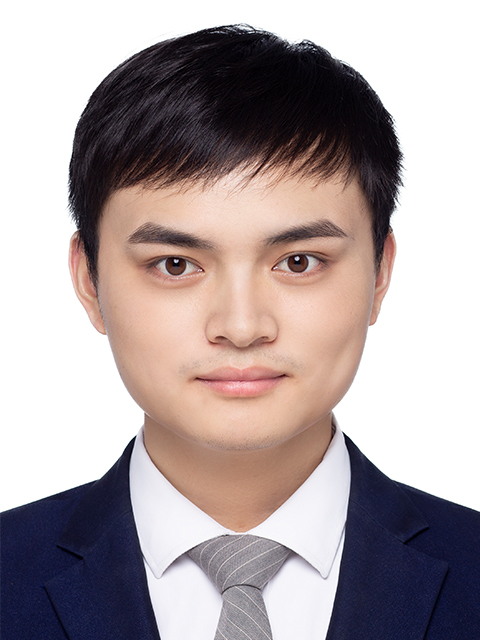}}]{Zezhong Sun}
        (Student Member, IEEE) received the B.S. degree in information engineering from Beijing University of Posts and Telecommunications (BUPT), Beijing, China, in 2019. He is currently working toward the Ph.D. degree in Information and Communication Engineering at the State Key Laboratory of Networking and Switching Technology (SKT-NST), BUPT. His research interests include information theory on integrated sensing and communications (ISAC), stochastic geometry and wireless sensing.
\end{IEEEbiography}

\begin{IEEEbiography}[{\includegraphics[width=1in,height=1.25in,clip,keepaspectratio]{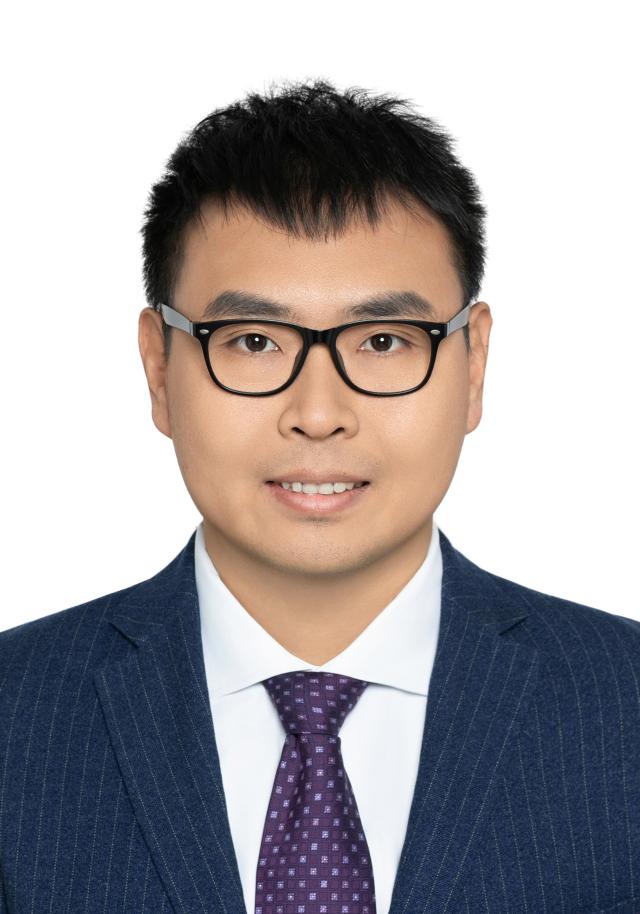}}]{Shi Yan}
        (Member, IEEE) received the Ph.D. degree in communication and information engineering from Beijing University of Posts and Telecommunications (BUPT), Beijing, China, in 2017. He is currently an Associate Professor with the State Key Laboratory of Networking and Switching Technology (SKT-NST), BUPT. In 2015, he was an Academic Visiting Scholar with Arizona State University, Tempe, AZ, USA. His research interests include game theory, integrated sensing and communication (ISAC), resource management, deep reinforcement learning, stochastic geometry, and fog radio access networks.
\end{IEEEbiography}

\begin{IEEEbiography}[{\includegraphics[width=1in,height=1.25in,clip,keepaspectratio]{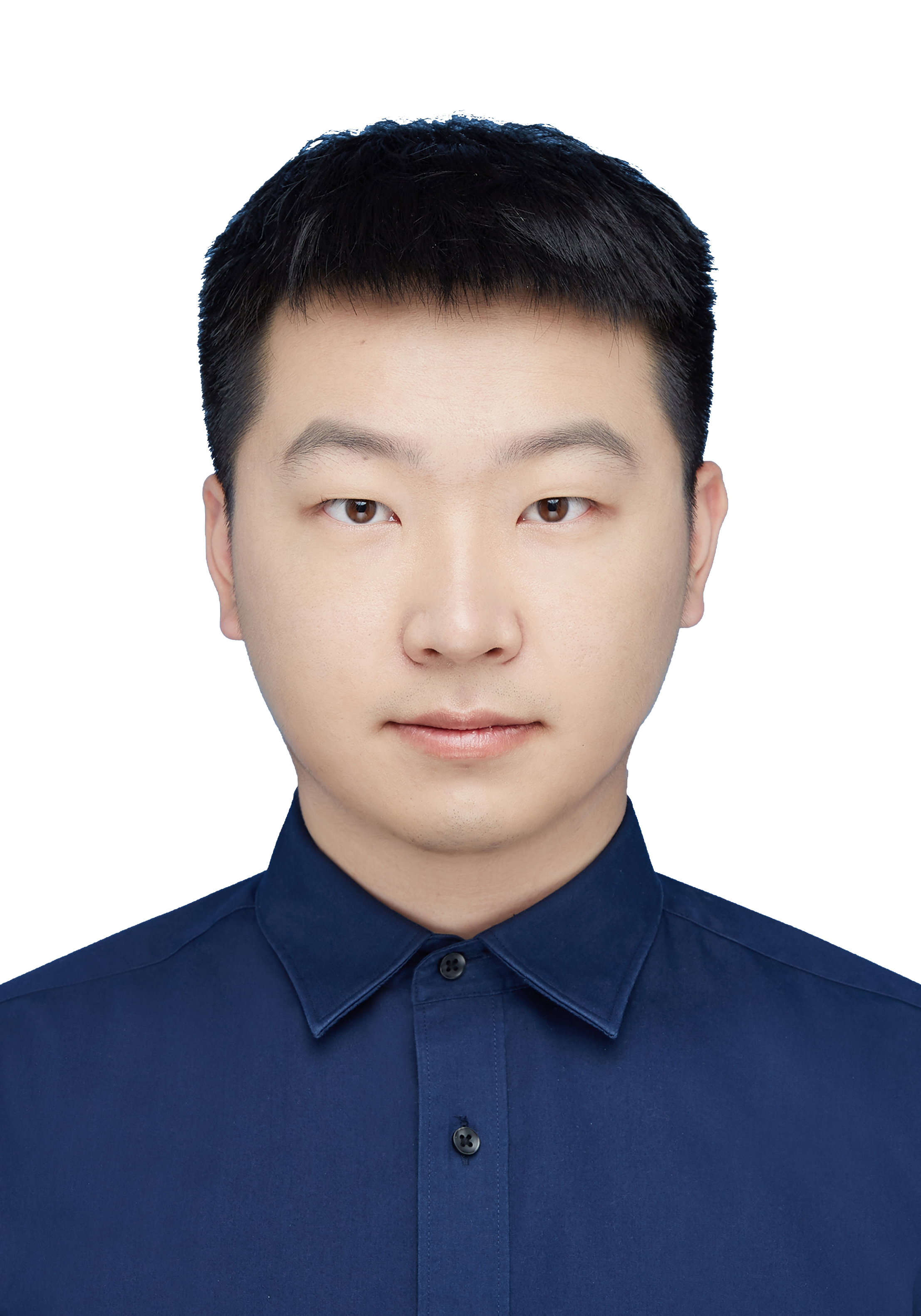}}]{Ning Jiang}
        (Student Member, IEEE) received the B.S. degree and M.S. degree in information and communication engineering from the Beijing University of Posts and Telecommunications (BUPT), Beijing, China, in 2020 and 2023, where he is currently pursuing the Ph.D degree with the State Key Laboratory of Networking and Switching Technology (SKT-NST), BUPT. His current research interests include integrated sensing and communication (ISAC), resource management, performance analysis and machine learning.
\end{IEEEbiography}

\begin{IEEEbiography}[{\includegraphics[width=1in,height=1.25in,clip,keepaspectratio]{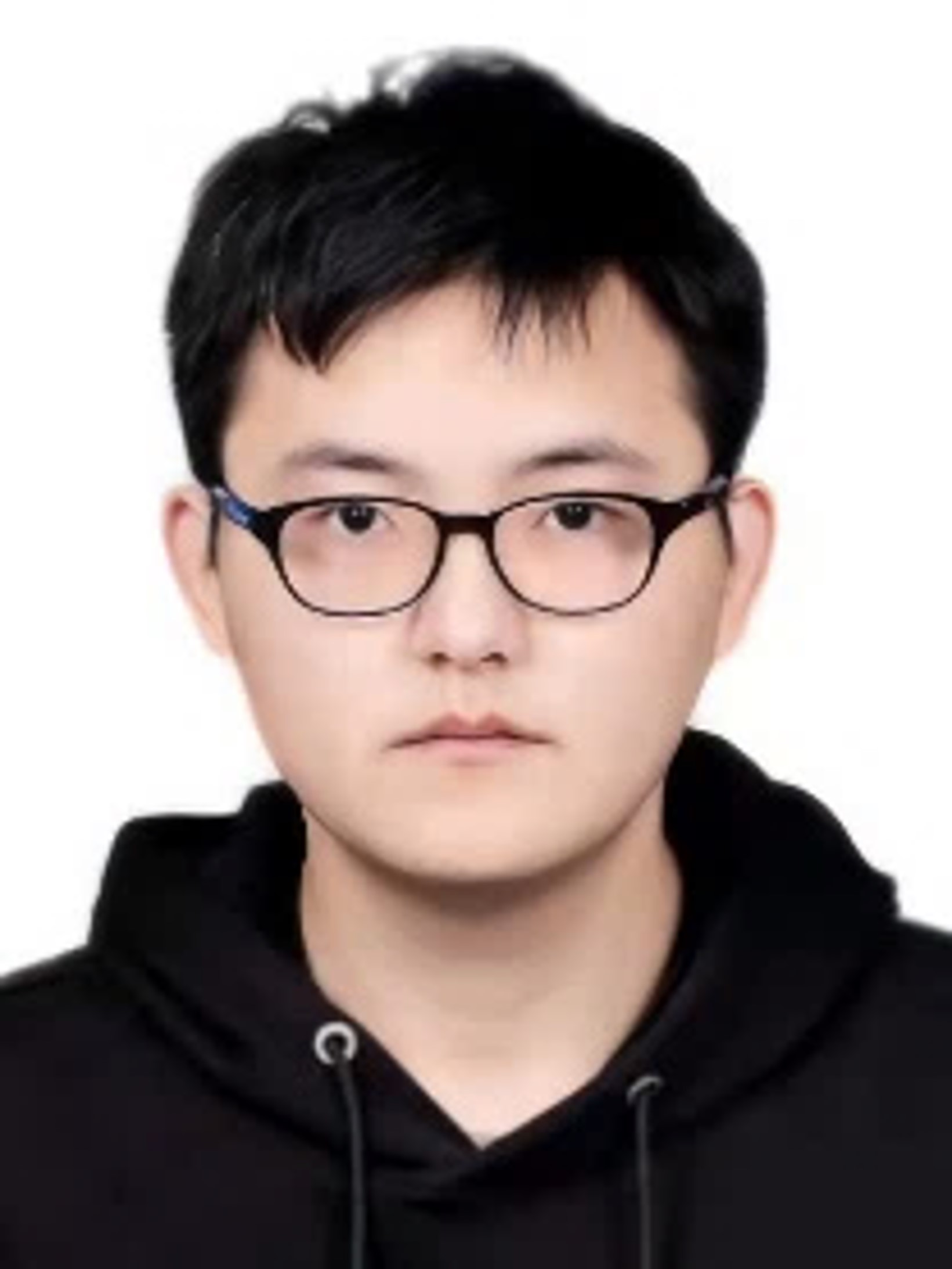}}]{Jiaen Zhou}
        (Student Member, IEEE) received the B.S. degree in communication engineering from the Beijing University of Posts and Telecommunications (BUPT), Beijing, China, in 2021, where he is currently pursuing the Ph.D degree with the State Key Laboratory of Networking and Switching Technology (SKT-NST), BUPT. His current research interests include integrated sensing and communication (ISAC), resource management and space-air-ground integrated networks.
\end{IEEEbiography}

\begin{IEEEbiography}[{\includegraphics[width=1in,height=1.25in,clip,keepaspectratio]{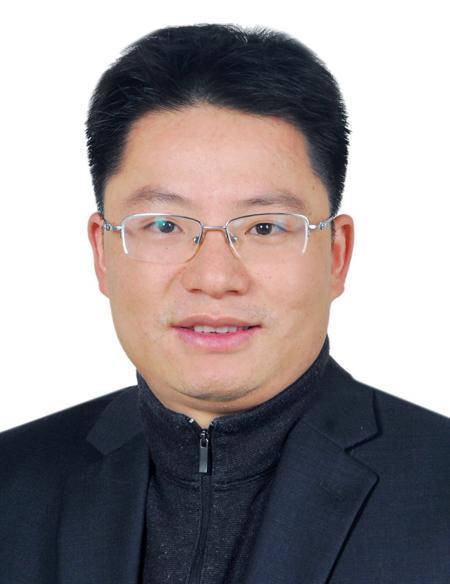}}]{Mugen Peng}
        (Fellow, IEEE) received the Ph.D. degree in communication and information systems from the Beijing University of Posts and Telecommunications (BUPT), Beijing, China, in 2005. Afterward, he joined BUPT, where he has been a Full Professor with the School of Information and Communication Engineering since 2012. During 2014 he was also an academic visiting fellow at Princeton University, NJ, USA. He leads a Research Group focusing on wireless transmission and networking technologies in BUPT. He has authored and coauthored over 100 refereed IEEE journal papers and over 300 conference proceeding papers. His main research areas include wireless communication theory, radio signal processing, cooperative communication, self-organization networking, heterogeneous networking, cloud communication, and Internet of Things. Prof. Peng was a recipient of the 2018 Heinrich Hertz Prize Paper Award; the 2014 IEEE ComSoc AP Outstanding Young Researcher Award; and the Best Paper Award in the ICC 2022, JCN 2016, IEEE WCNC 2015, IEEE GameNets 2014, IEEE CIT 2014, ICCTA 2011, IC-BNMT 2010, and IET CCWMC 2009. He is on the Editorial/Associate Editorial Board of the IEEE Communications Magazine, IEEE Access, the IEEE Internet of Things Journal, IET Communications, and China Communications. He is the Fellow of IEEE and IET.
\end{IEEEbiography}

\vfill
\end{document}